\newcommand{\domino}[1]{#1\texttt{-Domino}}
\renewcommand{\phi}{\varphi}
\newcommand{\manyone}{\preccurlyeq_m}
\newcommand{\coRE}{\Pi_1^0}
\newcommand{\shapeset}{\mathcal{T}}
\newcommand{\tileset}{\mathbf{T}}
\newcommand{\reduction}{\phi}
\newcommand{\blank}{\mathrm{blank}}
\newcommand{\dom}{\textrm{dom}}
\title{The Domino problem is undecidable on every rhombus subshift} 
\author{Benjamin {Hellouin de Menibus}}{Université Publique \and Université Paris-Saclay, CNRS, Laboratoire Interdisciplinaire des Sciences du Numérique, 91400, Orsay, France \and \url{https://www.lisn.fr/~hellouin} }{hellouin@lisn.fr}{https://orcid.org/0000-0001-5194-929X}{}
\author{Victor H. {Lutfalla}}{Université Publique \and GREYC, Univ. Caen \and \url{https://www.lutfalla.fr}}{victor@lutfalla.fr}{https://orcid.org/0000-0002-1261-0661}{}
\author{Camille Noûs}{Université Publique}{}{}{}
\authorrunning{B. Hellouin de Menibus, V. Lutfalla and C. Noûs} 
\keywords{Rhombus tiling, decidability, Domino problem} 
\begin{document}

\maketitle 

\begin{abstract} 
We extend the classical Domino problem to any tiling of rhombus-shaped tiles.
For any subshift $X$ of edge-to-edge rhombus tilings, such as the Penrose subshift, we prove that the associated $X$-Domino problem is $\coRE$-hard and therefore undecidable. 
It is $\coRE$-complete when the subshift $X$ is given by a computable sequence of forbidden patterns. \end{abstract}

\section{Introduction}
Tilings come, roughly speaking, in two families. Geometrical tilings are coverings of a space, usually the euclidean plane, by geometrical tiles without overlap; the constraints come from the geometry of the tiles. A famous example of geometrical tilings is the Penrose tilings \cite{penrose1974}. Symbolic tilings are colourings of a discrete structure, usually $\mathbb{Z}^d$ for some $d$, whose constraints are given by forbidden patterns. 

Both families have received a lot of attention for their dynamical and combinatorials properties. 
The specificities of geometrical tilings are their symmetries and their links with mathematical cristallography \cite{baake2013}. 
Symbolic tilings, on the other hand, have more links to computability and decidability theory. The seminal example is the Domino problem: given a set of colours and a set of forbidden patterns, is there a colouring of $\mathbb Z^2$ that satisfies those constraints? The proof by Berger \cite{berger1966} that this problem is undecidable shaped the whole domain of research.

There has been much work to extend Domino problems to structures that extend the classical symbolic results on $\mathbb Z^2$. An active area of research considers Domino problem on groups in order to relate properties of the group and of the tiling spaces; see \cite{aubrun2018} for a survey. Other considered extensions are Domino problems in self-similar structures to understand the limit between dimension 1 and 2 \cite{barbieri2016} or Domino problems inside a $\mathbb Z^2$-subshift to understand the effect of dynamical restrictions \cite{aubrun2020}.

Coming back to geometrical tilings, complex examples such as the Penrose tilings were originally defined with jigsaw-type tiles with indentations on their edges (see Fig.~\ref{fig:penrose_tiles}). It can be restated as simple polygon tiles with symbols on their edges \cite{debruijn1981} with the condition that symbols must match. 
In essence these tilings are both symbolic and geometrical.

\begin{figure}[t]
	
	\begin{subfigure}{0.48\textwidth}
		\center\includegraphics[width=0.6\textwidth]{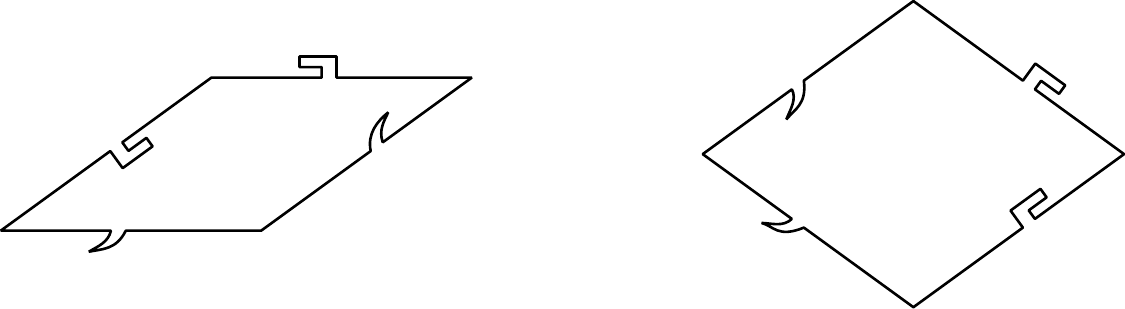}
		\caption{Original definition of the Penrose tiles \cite{penrose1974} with cuts and notches.}
	\end{subfigure}
	\hfill
	\begin{subfigure}{0.48\textwidth}
		\center\includegraphics[width=0.6\textwidth]{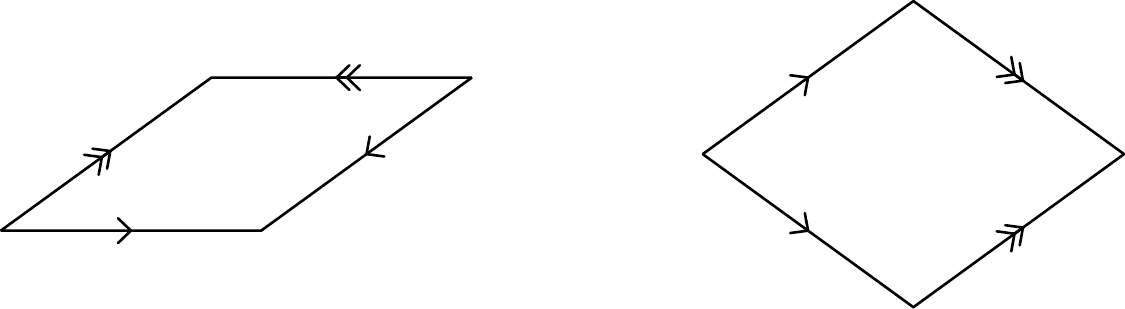}
		\caption{Alternative definition with arrows on the edges \cite{debruijn1981}. 
		The type and direction of the arrow must match between adjacent tiles.
		}
	\end{subfigure}
	\caption{Definitions of Penrose tiles: adding symbols on the edges of the tiles allow for simple rhombus shapes.}
	\label{fig:penrose_tiles}
\end{figure}
Similarly, given a set of shapes, we define symbolic tiles on those shapes by adding colours on the edges and study the induced symbolic-geometrical tilings. 
Let us consider the set of symbolic-geometrical tiles $\tileset$ of Fig.~\ref{fig:example_penrose_domino}. Since the shapes are Penrose rhombuses, there are two natural questions regarding this tileset:
\begin{enumerate}
	\item is there an infinite valid tiling of $\mathbb{R}^2$ with tiles in $\tileset$ up to translation? 
	\item is there an infinite valid tiling of $\mathbb{R}^2$ with tiles in $\tileset$ up to translation that projects to a geometrical Penrose tiling (that is, when removing the coulours from the tiles)?
\end{enumerate}

\begin{figure}[b]
	\center\includegraphics[width=0.8\textwidth]{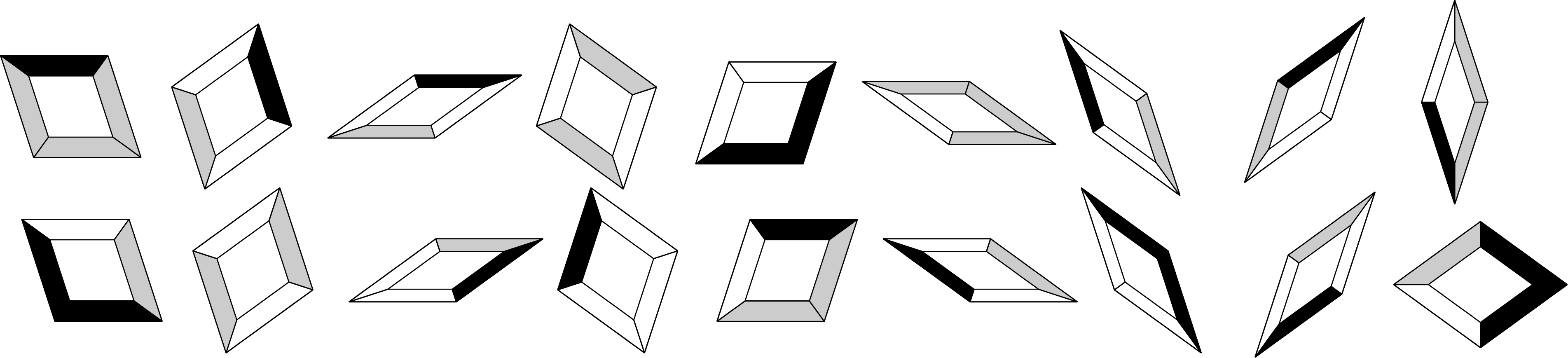}	
	\caption{An example of 
	symbolic-geometrical tiles on Penrose rhombuses.}
	\label{fig:example_penrose_domino}
\end{figure}
It is not hard to see that the first question is at least as hard as (and is in fact equivalent to) the classical Domino problem, which corresponds to the case where the input tiles are all the same rhombus. This motivates us to study the second question, where the geometrical (Penrose) subshift forces the use of the diferent rhombuses.

These are instances of the Domino problem in geometrical subshifts, which is the object of the present article. For any set of rhombuses $\shapeset$ and a geometrical subshift $X$ on these rhombuses, the $\domino{X}$ problem is defined as follows: given as input a set of tiles, that is, rhombuses from $\shapeset$ with a colour on each edge, decide whether it is possible to tile the plane in such a way that:
\begin{enumerate}
	\item tiles with a common edge have the same colour along the shared edge, and
	\item the geometrical tiling (when colours are erased) is valid for $X$.
\end{enumerate}
Our main result is the following:
\begin{theorem}
	Let $X$ be a geometrical subshift given by a computable list of forbidden patterns. $\domino{X}$ is many-one equivalent to the classical Domino problem on $\mathbb Z^2$, that is, co-computably enumerable-complete, and thus undecidable.
\end{theorem}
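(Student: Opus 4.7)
My plan is to prove the many-one equivalence in both directions: an upper bound placing $\domino{X}$ in $\coRE$, and a lower bound reducing the classical Domino problem on $\mathbb Z^2$ to $\domino{X}$.

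\textbf{Upper bound.} It suffices to show $\domino{X} \in \coRE$, since the classical Domino problem is already $\coRE$-complete and therefore every $\coRE$ set many-one reduces to it. By a standard compactness argument for rhombus tilings, a tile set $\tileset$ fails to tile iff there exist integers $n, m$ such that no rhombus patch of diameter at most $n$ is simultaneously color-valid for $\tileset$ and free of forbidden patterns of $X$ of size at most $m$. Both conditions are decidable on a finite patch (the forbidden patterns are enumerated by hypothesis, and only those of size $\leq m$ need to be checked), so enumerating over $(n,m)$ semi-decides non-tileability.

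\textbf{Lower bound.} Given a Wang tile set $W$, I construct a rhombus tile set $\tileset_W$ over the shapes $\shapeset$ of $X$ that admits an $X$-tiling if and only if $W$ tiles $\mathbb Z^2$. The idea is to extract a $\mathbb Z^2$-like grid from any $X$-tiling using de Bruijn lines (worms). Fix two distinct edge directions $\vec u, \vec v$ that occur in some tiling of $X$; worms in these directions yield a combinatorial grid on the $(\vec u, \vec v)$-rhombuses which sit at their pairwise intersections. For each rhombus shape in $\shapeset$ I design colored variants: the $(\vec u, \vec v)$-rhombuses carry Wang tile data on their four edges, while the remaining rhombuses, which belong to a single $\vec u$- or $\vec v$-worm, act as \emph{wires} that carry the Wang colour unchanged across the worm. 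The edge-matching constraint then forces the Wang data along each worm to be consistent, yielding a valid Wang tiling of the grid of intersections. Correctness: a valid Wang tiling of $\mathbb Z^2$ lifts to a $\tileset_W$-coloring of any tiling of $X$, and conversely any $\tileset_W$-tiling in $X$ produces arbitrarily large valid Wang patches on the intersection grid, hence a Wang tiling of $\mathbb Z^2$ by compactness.

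\textbf{Main obstacle.} The hard step is verifying that the worm-intersection structure behaves uniformly for every rhombus subshift $X$. The key geometric lemma I expect to need is that in any edge-to-edge rhombus tiling of the plane, two non-parallel worms meet in exactly one rhombus (a topological argument based on the Jordan curve theorem applied to the worms as bi-infinite curves) and each family of parallel worms carries a natural total order, so that the grid of $(\vec u, \vec v)$-rhombuses is combinatorially isomorphic to $\mathbb Z^2$. A secondary technical obstacle is designing the wire tiles so that Wang information propagates faithfully through arbitrarily long runs of non-intersection rhombuses without imposing additional constraints that would rule out otherwise-admissible $X$-tilings; this requires enough colour choices at every rhombus type appearing in $X$ to transparently forward every Wang colour. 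Modulo these two points, both correctness and the uniformity of the reduction across all rhombus subshifts are straightforward.
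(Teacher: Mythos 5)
Your upper bound matches the paper's (decide a ``disk-tiling'' problem for each $n$, conclude by compactness of FLC tiling spaces), and your overall strategy for the lower bound --- put Wang data on the rhombuses sitting at intersections of two families of chains/worms and propagate colours along the chains with wire tiles --- is exactly the paper's reduction $\reduction_r$. But the step you defer as the ``main obstacle'' is a genuine gap, and the lemma you expect to close it with is false as stated. In an arbitrary edge-to-edge rhombus tiling the family of chains of a given normal vector $\vec u$ need not be indexed by $\mathbb{Z}$: it can be a finite interval, $\mathbb{N}$, $-\mathbb{N}$, or empty, and the $(\vec u,\vec v)$-rhombus may simply not occur (two non-parallel chains cross \emph{at most} once, and a tiling of $X$ may contain few or no chains in one of the two directions). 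Concretely, if $X$ contains even one tiling $x$ in which the chosen rhombus is not uniformly recurrent --- e.g.\ a ``cube''-type tiling inside a Penrose-shape full shift that omits one rhombus entirely --- then $x$ can be coloured using only your wire and neutral tiles (each chain just picks one colour), so $\reduction(W)$ is a yes-instance of $\domino{X}$ regardless of whether $W$ tiles $\mathbb{Z}^2$, and the reduction is wrong. Your converse direction (``arbitrarily large valid Wang patches on the intersection grid'') silently assumes the grid is infinite in all four directions, which is precisely what fails.

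The paper repairs this in two stages that your proposal is missing. First, the $\mathbb{Z}^2$-indexing of occurrences (Lemma~\ref{lemma:uniform}) is proved only under the hypothesis that the chosen shape $r$ is \emph{uniformly recurrent} in every configuration, which is what forces both chain families to be indexed by $\mathbb{Z}$; this gives $\coRE$-hardness for shape uniformly recurrent subshifts (Proposition~\ref{prop:core-hard}). Second, for an arbitrary non-empty $X$, a compactness induction on the shapeset (Lemmas~\ref{lemma:reduction} and~\ref{lemma:subsubshift}) produces a sub-shapeset $\shapeset'$ with $\rho_{\shapeset'}(X)\neq\emptyset$ containing a shape uniformly recurrent in every configuration of the restriction, and then $\domino{\rho_{\shapeset'}(X)}\manyone\domino{X}$ by adding, for each discarded shape, a tile with fresh colours on all edges so it can never be used. (Note this choice of $\shapeset'$ need not be computable from $X$, but it is a fixed finite datum hard-wired into the reduction, so many-one reducibility is unaffected.) A minor further point: rhombuses sharing \emph{no} edge direction with the coding rhombus belong to neither chain family, so they need neutral all-blank tiles rather than wire tiles. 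Without the uniform-recurrence mechanism your argument does not go through for general $X$.
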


\section{Definitions}
\subsection{Geometrical tiling spaces}
\begin{definition}[Shapes and patches]
	We call \emph{shape} a geometrical rhombus given as a pair of vectors $(\vec{u},\vec{v})$ and a position $p$.\\
	We call \emph{shapeset} a finite set of shapes considered up-to-translation, see Fig.~\ref{fig:shapeset}.\\
	We call \emph{patch} an edge-to-edge simply connected finite set of shapes, \emph{i.e.}, any two tiles are either disjoint, share a single common vertex or a full common edge, and there is no hole in the patch. We call \emph{support} of a patch the union of its shapes.
	We call \emph{pattern} a patch up to translation.
\end{definition}
Note that shapes are not taken up to rotation. 
\begin{definition}[Tilings, full shift and subshifts]
	Given a shapeset $\shapeset$, we call \emph{$\shapeset$-tiling} an edge-to-edge covering of the euclidean plane without overlap by translates of the shapes in $\shapeset$: see Fig.~\ref{fig:geometrical_tilings}.
	
	We call \emph{full shift} on $\shapeset$, denoted by $X_\shapeset$, the set of all $\shapeset$-tilings.
	
	We call \emph{subshift} of $X_\shapeset$ any subset $X$ of $X_\shapeset$ that is invariant by translation and closed for the tiling topology \cite{robinson2004}.
\end{definition}

\begin{figure}[!b]
	\begin{subfigure}{\textwidth}
		\includegraphics[width=\textwidth]{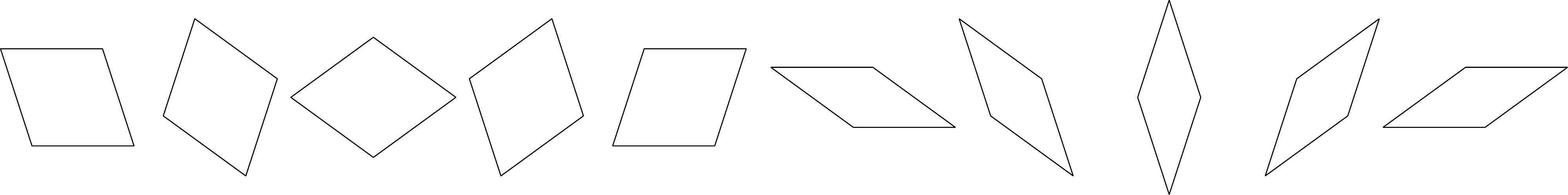}
		\caption{An example of shapeset: the set $\shapeset_{pen}$ of Penrose rhombuses up to translation.}
		\label{fig:shapeset}
	\end{subfigure}
	\smallskip
	
	\begin{subfigure}{\textwidth}
		\includegraphics[width=0.45\textwidth]{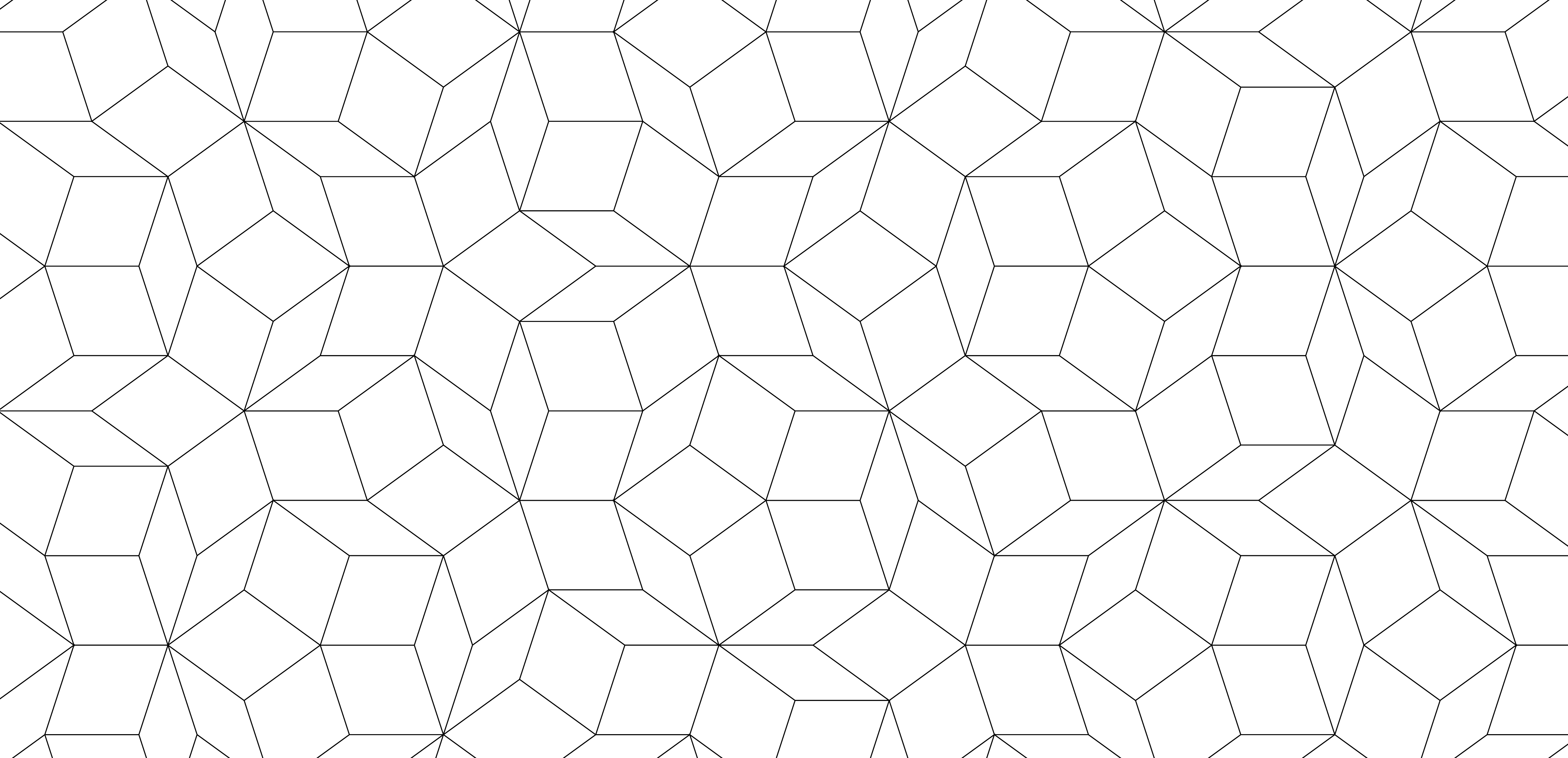}\hfill
		\includegraphics[width=0.45\textwidth]{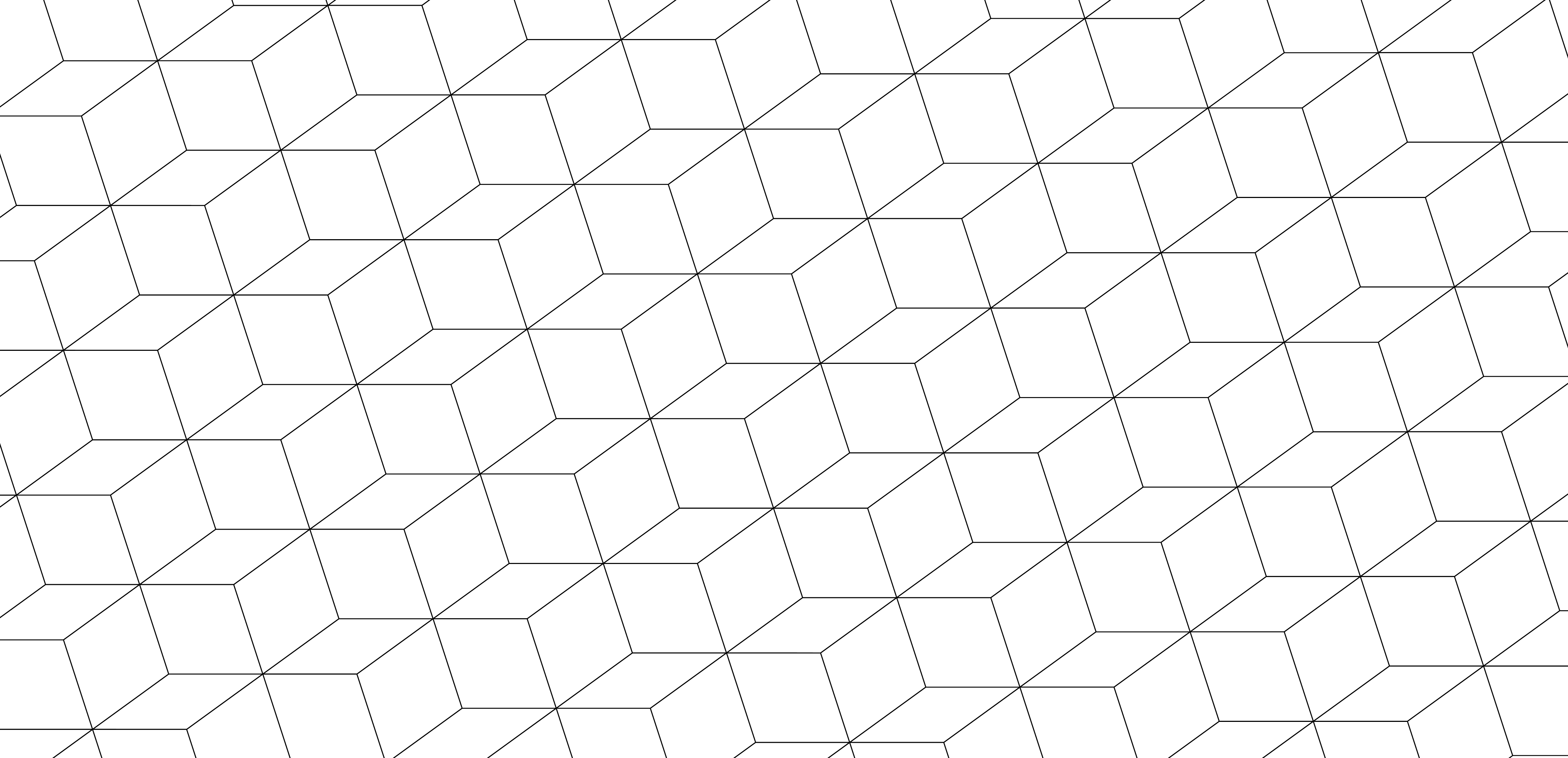}
		\caption{Two examples of geometrical $\shapeset_{pen}$-tilings: on the left a geometrical Penrose tiling and on the right a "cube"-tiling.}
		\label{fig:geometrical_tilings}
	\end{subfigure}
	\caption{Shapes and geometrical tilings.}
	\label{fig:geometrical}
\end{figure}

Edge-to-edge rhombus tilings with finitely many shapes up to translation have \emph{Finite Local Complexity} (FLC): that is, for any compact $C\subseteq \mathbb{R}^2$, there are finitely many patterns whose support is included in $C$.
The FLC hypothesis appears a lot in the study of geometrical tilings and Delone sets; see \cite{baake2013}. In particular, FLC ensures that the tiling space shares most properties with standard $\mathbb{Z}^d$ tiling spaces, such as being compact for the usual tiling topology \cite{robinson2004,lutfalla2022}.

A subshift $X$ can always be characterized by a countable (possibly infinite) set of forbidden patterns, that we denote as a sequence $\mathcal{F}:= (f_n)_{n\in\mathbb{N}}$. 
In other words, $X$ is the set of all tilings where no pattern in $\mathcal F$ appears.
When $\mathcal F$ is computable, we say that the subshift $X$ is \emph{effective}.

We say that a pattern (or patch) has \emph{minimal radius $r$} when its support contains a disk of radius $r$ centered on a vertex of the pattern, and when removing any shape from the patch would break that property. 

A sequence $\mathcal{F}$ of forbidden patterns being fixed, we call locally-allowed patterns $\mathcal{A}(X)$ the set of finite patterns where no forbidden pattern appears, 
and rank-$r$ locally-allowed patterns $\mathcal{A}_r(X)$ the set of patterns of minimal radius $r$ that do not contain any of the first $r$ forbidden patterns. 
Note that when $X$ is an FLC subshift $\mathcal{A}_r(X)$ is finite for all $r$: indeed, there exists a constant $d$ (maximum diameter of the shapes) such that the support of any minimal radius $r$ pattern is included in an $r+d$ disk.

Note that patterns in $\mathcal A(X)$ may not be globally allowed in $X$ (appear in no infinite tiling in $X$).
They may even appear in no tiling of the full shift $X_\shapeset$ if they contain a geometrical impossibility.
Such patterns are called \emph{deceptions} \cite{dworkin1995}.

The interest of locally allowed patterns is that, as seen below in Lemma \ref{lemma:locally_allowed}, they are computable from the list of forbidden patterns, whereas it is not the case for globally allowed patterns.

\subsection{Symbolic-geometrical tiling spaces}
\begin{definition}[Symbols, tiles and tilesets]
	A \emph{tile} is a shape endowed with a colour on each edge, as seen in Fig.~\ref{fig:wang_rhombus}(a). 
	
	Formally, given a finite set $C$ whose elements are called colours and a rhombus shape $r$, we call \emph{$r$-Wang tile} or simply \emph{$r$-tile} a quintuple $(r, a_0, a_1, a_2, a_3)$ with $a_i \in C$. 
	Formally, with $r = (\vec{u}, \vec{v}, p)$, the side $(p, p+\vec{u})$ has colour $a_0$, the side $(p+\vec{u}, p+\vec{u}+\vec{v})$ has colour $a_1$ and so on.

	Given a shapeset $\shapeset$, we call \emph{$\shapeset$-tile} a $r$-tile for some $r\in \shapeset$.
	We call \emph{$\shapeset$-tileset} a finite set of $\shapeset$-tiles, considered up to translation, such that each shape has at least a tile.	
\end{definition}
\begin{definition}[Colour erasing operator $\pi$]
	We define the colour erasing operator $\pi$ by: 
	\begin{itemize}
		\item for any $r$-tile $t$, $\pi(t) := r$
		\item for a tiling $x$ (or finite patch of tiles), $\pi(x) := \{ \pi(t),\ t \in x\}$
		\item for a set of tilings $X$, $\pi(X) := \{ \pi(x),\ x \in X\}$
	\end{itemize}
\end{definition}
\begin{definition}[Tiling]
	Given a finite set of colours $C$ and a $\shapeset$-tileset $\tileset$, we call $\tileset$-tiling a tiling $x$ such that $\pi(x) \in X_\shapeset$ and such that any two tiles in $x$ that share an edge have the same colour on their shared edge. See Fig.~\ref{fig:wang_rhombus}(b).
	
	We denote by $X_\tileset$ the subshift of all $\tileset$-tilings. 
\end{definition}
\begin{figure}[t]
	
	\begin{subfigure}{0.48\textwidth}
		\includegraphics[width=\textwidth]{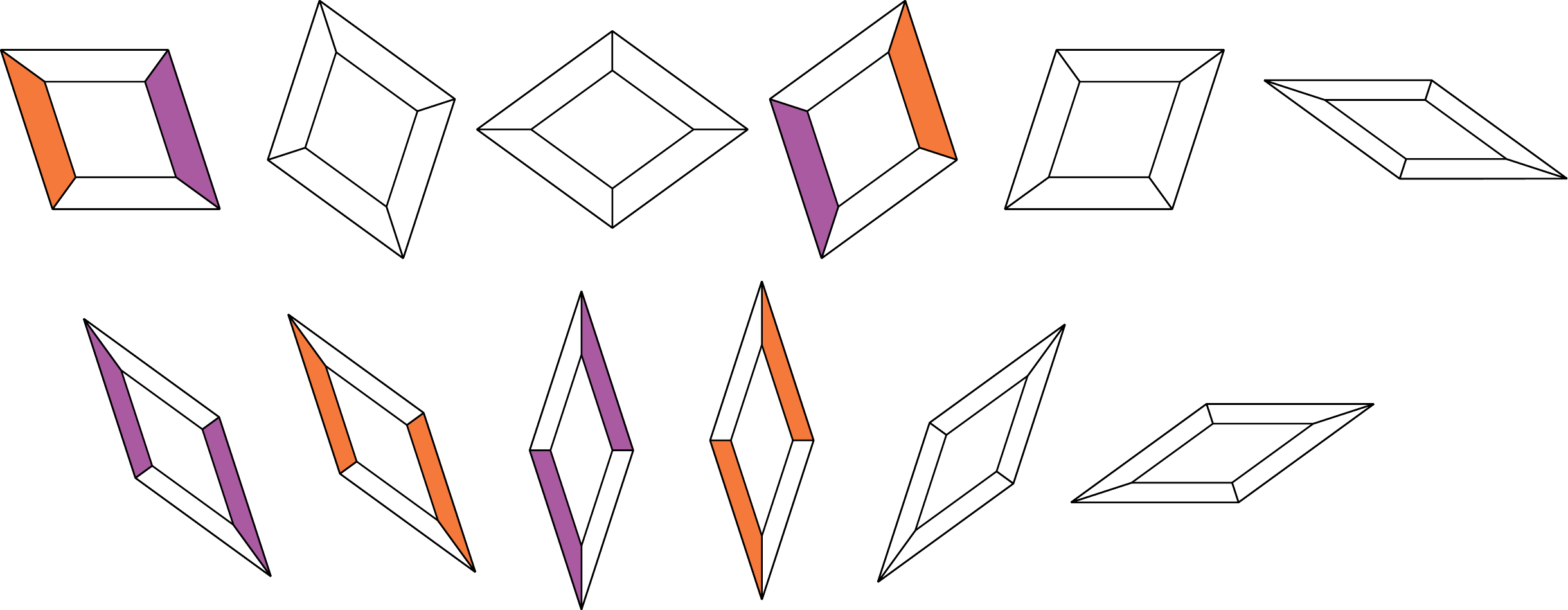}
		\caption{An example of $\shapeset_{pen}$-tileset.}
	\end{subfigure}
	\hfill
	\begin{subfigure}{0.48\textwidth}
		\center
		\includegraphics[width=0.9\textwidth]{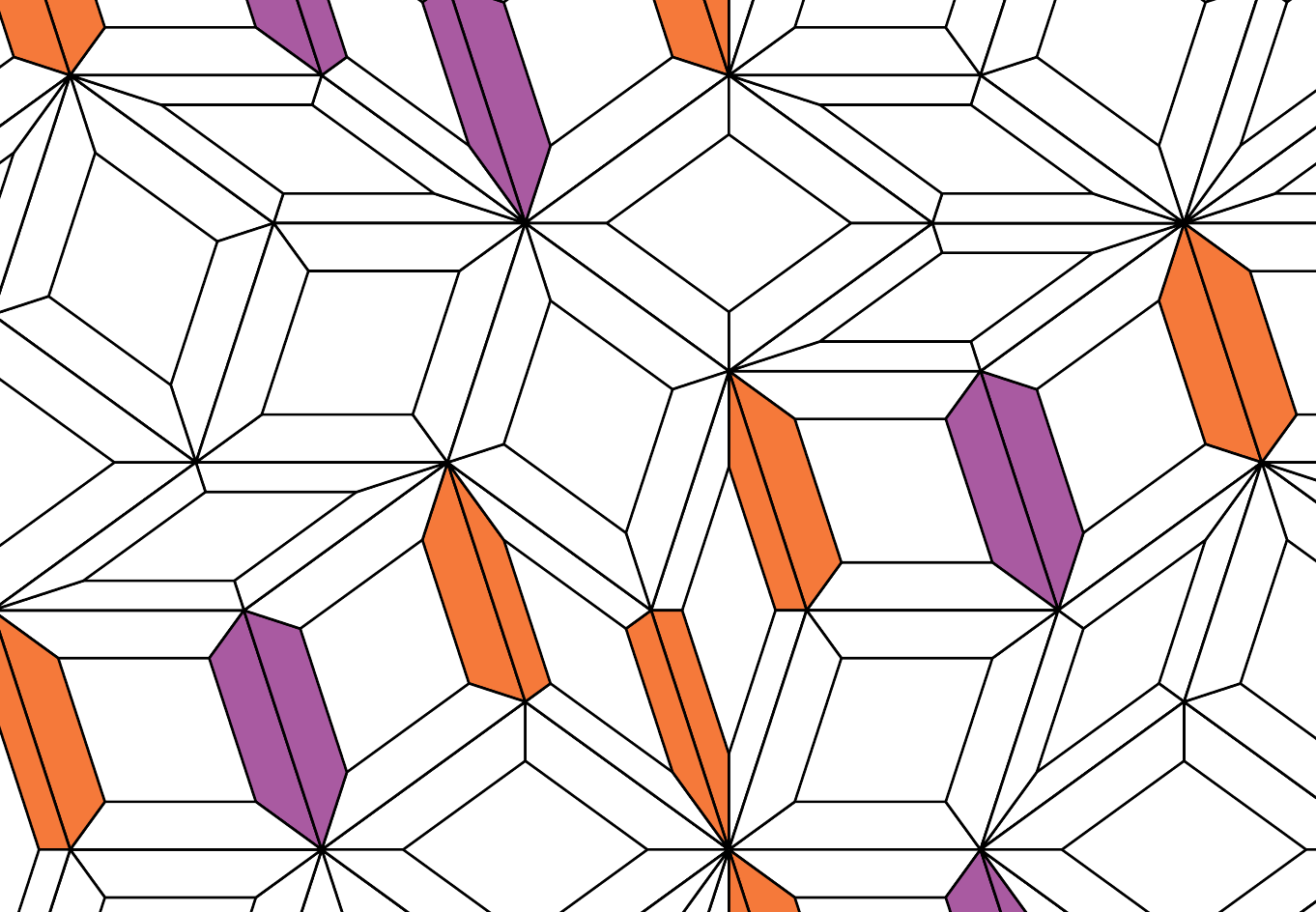}
		\caption{An example of valid patch.
		}
	\end{subfigure}
	\caption{Rhombus wang tiles.}
	\label{fig:wang_rhombus}
\end{figure}
A symbolic-geometrical subshift $X$ is given by a set of shapes (geometrical constraints), a set of forbidden patterns (geometrical subshift) and colourings on the tiles (symbolic constraints).
Even when the geometrical subshift is a full shift, geometrical and symbolic constraints can interact in interesting ways. For example, there is a choice of tiles on the Penrose rhombuses such that all valid tilings correspond to a geometrical Penrose tiling after erasing colours (in particular, no valid tiling use a single shape); see Appendix \ref{appendix:penrose}.

The definitions of minimal radius $r$ patterns and locally allowed patterns extend naturally to symbolic-geometrical tilings. 

\subsection{Computability and decidability}
\begin{definition}
	A \emph{decision problem} is a function $A : \dom(A) \to \{0,1\}$, where $\dom(A)$ is called the \emph{input domain} of $A$.
\end{definition}
\begin{definition}
	A decision problem $A$ is said to be decidable when there exists an algorithm (or Turing machine) that, given as input any $x\in dom(A)$, terminates and outputs $A(x)$.
\end{definition}
A weaker notion of computability for decision problems is the following:
\begin{definition}[co-computably enumerable, $\coRE$]
	A decision problem $A$ is called \emph{co-computably enumerable}, also known as \emph{co-recursively enumerable}, when there exists a total computable function $f : \dom(A)\times\mathbb{N} \to \{0,1\}$ such that:
	\[\forall x\in \dom(A), \quad A(x)\Leftrightarrow \forall n \in \mathbb{N}, f(x,n)\]
	Alternatively, a decision problem $A$ is co-computably enumerable if there is an algorithm that, on input $x$, terminates if and only if $A(x)$ is false.
	
	We denote by $\coRE$ the class of co-computably enumerable problems.
\end{definition}
$\coRE$ is a class of the arithmetical hierarchy; see \cite{kozen2006,monin2022}.
\begin{definition}[Many-one reductions]	
	Given two decision problems $A$ and $B$, we say that $A$ \emph{many-one reduces} to $B$, and write $A\manyone B$, 
	when there exists a total computable function $f : \dom(A)\to \dom(B)$ such that $A = B\circ f$. 
\end{definition}

\begin{figure}[htp]
	\center
	\includegraphics[width=0.3\textwidth]{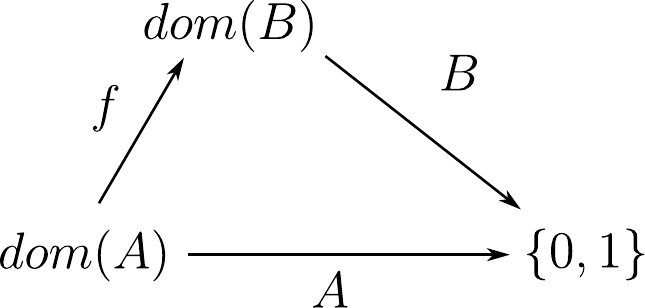}
	
	\caption{The diagram of a many one-reduction, $A\manyone B$ when $f$ is computable and the diagram commutes.}
	\label{fig:diagram_many-one}
\end{figure}

\begin{definition}[$\coRE$-hardness and $\coRE$-completeness]
	A problem $A$ is called $\coRE$-hard if $B \manyone A$ for any problem $B$ in $\coRE$.
	
	A problem $A$ is called $\coRE$-complete when it is both in $\coRE$ and $\coRE$-hard.
\end{definition}
Notice that $\coRE$-hard problems are undecidable. 
The canonical example of a $\coRE$-complete problem is the co-halting problem, that is the problem of deciding whether a Turing Machine does not terminate in finite time. 

Many-one reductions are a restrictive case of Turing reductions that are appropriate to study classes of decision problems such as $\coRE$, as the following Lemma shows: 

\begin{lemma}[$\coRE$-hardness]
	Given two problems $A$ and $B$ such that $A\manyone B$,
	\begin{itemize}
		\item if $B$ is $\coRE$, then $A$ is $\coRE$.
		\item if $A$ is $\coRE$-hard, then $B$ is $\coRE$-hard.
	\end{itemize}
\end{lemma}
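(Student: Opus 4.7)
The plan is to prove both items by directly manipulating the many-one reduction $f : \dom(A) \to \dom(B)$, which by hypothesis is a total computable function satisfying $A = B \circ f$.

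For the first item, I start from a $\coRE$-witness for $B$: a total computable $g : \dom(B)\times\mathbb{N} \to \{0,1\}$ with $B(y) \Leftrightarrow \forall n,\ g(y,n)$. Setting $h(x,n) := g(f(x),n)$ yields a total computable function as a composition of total computable functions (note that $f(x) \in \dom(B)$, so $g(f(x),n)$ is well-defined). The chain of equivalences $A(x) \Leftrightarrow B(f(x)) \Leftrightarrow \forall n,\ g(f(x),n) \Leftrightarrow \forall n,\ h(x,n)$ then exhibits $h$ as a $\coRE$-witness for $A$.

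For the second item, I would first verify that $\manyone$ is transitive: if $C \manyone A$ via a computable $g$ and $A \manyone B$ via $f$, then $f \circ g$ is computable (compositions of computables, with matching domains) and satisfies $C = A \circ g = B \circ f \circ g$, so $C \manyone B$. Once transitivity is in hand, the second item is immediate: assuming $A$ is $\coRE$-hard, every $C \in \coRE$ satisfies $C \manyone A$, hence $C \manyone B$ by composing with the given reduction $A \manyone B$. Since $C$ was arbitrary, $B$ is $\coRE$-hard.

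Both parts are routine; there is no real obstacle. The only bookkeeping concern is ensuring that every composition lands in the correct input domain, which follows automatically from the definition of a many-one reduction, since the codomain of the reduction coincides by construction with the input domain of the target problem.
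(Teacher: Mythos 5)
Your proof is correct and is the standard argument; the paper states this lemma without proof, treating it as a routine fact about many-one reductions. Both of your steps --- composing the reduction with the $\coRE$-witness for the first item, and using transitivity of $\manyone$ for the second --- are exactly what one would write if the paper had included a proof, so there is nothing to flag.
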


\subsection{Domino problems}
In this paper, our goal is to prove the $\coRE$-hardness of a generalisation of the classical Domino problem, which is known to be $\coRE$-complete.

The classical Domino problem asks, given as input a finite set of Wang tiles, \emph{i.e.}, square tiles with a colour on each edge, whether there exists an infinite valid tiling with these tiles: see Fig.~\ref{fig:wang}.

\begin{theorem}[Berger66 \cite{berger1966}]
	The classical Domino problem is $\coRE$-complete.
\end{theorem}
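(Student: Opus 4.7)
The plan is to prove the two directions of $\coRE$-completeness separately. For membership in $\coRE$, I would exhibit an algorithm that halts exactly on the no-instances. Given a finite Wang tileset $\tileset$, enumerate, for increasing $n$, all patches of the $n \times n$ square that respect the colour-matching constraint, and reject as soon as some $n$ admits no valid patch. A compactness argument (König's lemma applied to the tree of nested partial tilings) shows that the plane admits a $\tileset$-tiling if and only if every finite square is tileable, so this procedure halts iff there is no infinite tiling.

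For $\coRE$-hardness, I would reduce the co-halting problem on empty input to the classical Domino problem. Given a Turing machine $M$, the goal is to compute a tileset $\tileset_M$ such that $\tileset_M$ admits a tiling of the plane iff $M$ does not halt on the empty word. The classical strategy, following Berger's original construction or the cleaner Robinson variant, combines two ingredients. First, a \emph{skeleton} tileset whose tilings force a self-similar hierarchy of nested squares of geometrically growing side lengths (say $4^n$) appearing at every scale in every valid tiling. Second, a \emph{computation layer} superimposed on these squares: inside each square of side $L$, the tiles encode the space--time diagram of $M$ on blank input for $L$ steps, with transitions checked locally by the edge colours of the tiles.

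If $M$ halts in $T$ steps, then no square of side length $\geq T$ can be completed consistently with the computation layer, so no valid tiling of the plane exists. Conversely, if $M$ never halts, consistent simulations exist at every scale, and a compactness argument on partial tilings of increasing finite windows yields a global tiling. The map $M \mapsto \tileset_M$ is plainly computable, establishing the many-one reduction from the co-halting problem.

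The main obstacle is the aperiodic skeleton: engineering a finite set of local edge-matching constraints that propagate global geometric information to enforce nested squares of every size in every valid tiling. This self-similar forcing is the combinatorial heart of Berger's (or Robinson's) proof and is what makes the theorem non-trivial; once the skeleton is available, overlaying a Turing-machine simulation so that the available tape length inside each square grows with the scale is a comparatively routine matter.
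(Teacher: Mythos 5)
The paper does not prove this statement; it is quoted as a black-box result of Berger, with the remark that Berger's argument is a many-one reduction from the co-halting problem. Your outline is a correct and faithful summary of that standard argument: $\coRE$-membership by compactness over finite square patches, and hardness by overlaying a Turing-machine simulation on an aperiodic hierarchical skeleton (Berger's, or Robinson's cleaner variant). The one caveat is that what you call ``the main obstacle'' --- the finite tileset whose local matching rules force nested squares of every scale in every tiling --- is precisely the content of the theorem and is asserted rather than constructed here; as a proof your text is therefore a correct roadmap rather than a complete argument, but it identifies all the right ingredients and where the difficulty lies.
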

Berger's paper provides a many-one reduction to the co-halting problem. 
We extend this classical problem to rhombus-shaped Wang tiles.
\begin{figure}[b]
	\begin{subfigure}{0.4\textwidth}
		\center
		\includegraphics[width=\textwidth]{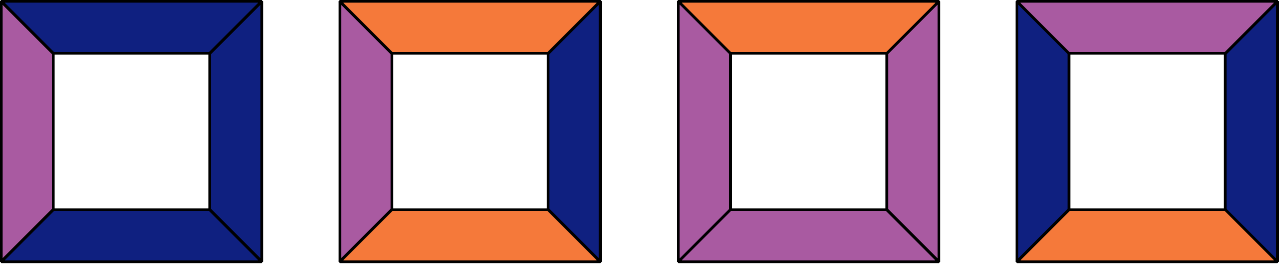}
		\caption{A tileset of 4 tiles.}
	\end{subfigure}\hfill
	\begin{subfigure}{0.55\textwidth}
		\center
		\includegraphics[width=0.8\textwidth]{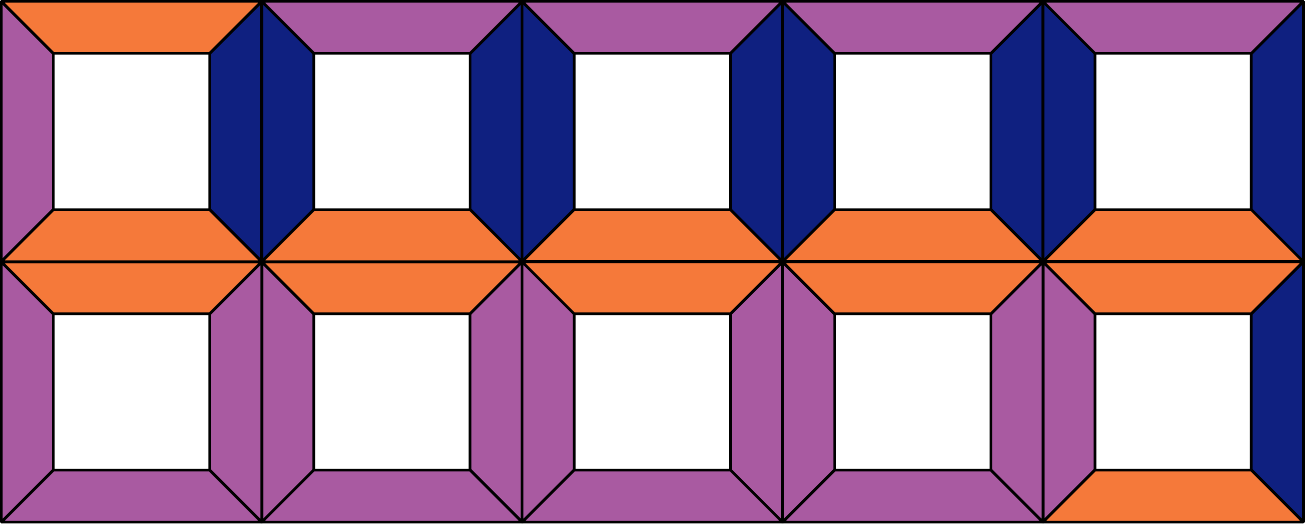}
		\caption{A valid patch with these tiles.}
	\end{subfigure}
	\caption{Tiles and tilings: a patch or tiling is called \emph{valid} when any two adjacent tiles have the same colour on their shared edge.}
	\label{fig:wang}
\end{figure}

\begin{definition}[$\domino{X}$] 
	Given a subshift $X$ on $\shapeset$, the \texttt{Domino} problem on $X$ is defined as:
	\begin{description}
		\item[Input] A finite set $\tileset$ of $\shapeset$-tiles
		\item[Output] Is there a $\tileset$-tiling $x \in X_{\tileset}$ such that $\pi(x) \in X$?
	\end{description}
\end{definition}
The classical \texttt{Domino} problem is $\domino{X_{\{\square\}}}$, that is, the domino problem on the full shift with a single shape (usually a square shape, but any single rhombus works). 
$\domino{Penrose}$ would be, given a set of tiles on Penrose rhombuses as in Fig.~\ref{fig:wang_rhombus} looking for a Penrose tiling with matching edges.

\section{Complexity of the Domino problem on rhombus subshifts}
\subsection{The Domino problem on effective rhombus subshifts is $\coRE$}

\begin{lemma}
	Let $\shapeset$ be a finite set of rhombus shapes.
	The following problem is computable:
	\begin{description}
		\item[Input] an integer $n$, a finite list of forbidden patterns $\mathcal F$, and a tileset $\tileset$ of $\shapeset$-tiles
		\item[Output] the list of all minimal radius $n$ patterns of $\tileset$-tiles that avoid all patterns in $\mathcal F$
	\end{description}
	\label{lemma:locally_allowed}
\end{lemma}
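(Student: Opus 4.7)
The plan is to reduce the problem to a finite enumeration, the finiteness being guaranteed by Finite Local Complexity. As already observed in the discussion preceding the lemma, the support of any minimal radius $n$ pattern is contained in a disk of radius $n+d$, where $d$ is the (computable) maximum diameter of the shapes in $\shapeset$; dividing $\pi(n+d)^2$ by the minimum area of a shape in $\shapeset$ yields a computable integer bound $N$ on the number of tiles in such a patch.

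I would then enumerate all connected, edge-to-edge, overlap-free, colour-coherent arrangements of at most $N$ $\tileset$-tiles, up to translation, by a tree search. Fix a canonical origin (for instance, place a distinguished vertex of the first tile at the origin) and try each starting tile of $\tileset$; then at each step pick a boundary edge of the current configuration and try every $\tileset$-tile that can be glued along it, rejecting any extension that overlaps an existing tile or disagrees in colour on the shared edge. Since the depth is bounded by $N$ and the branching is finite, the search terminates. For each configuration produced, I would run three purely finite checks: simple connectedness of the support (so that it really is a patch), the minimal radius $n$ property (some vertex admits a closed disk of radius $n$ contained in the support, and no single tile can be removed while preserving that property), and avoidance of every pattern of $\mathcal{F}$, which is a finite pattern-matching computation since both $\mathcal{F}$ and the configuration are finite.

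The main difficulty is not mathematical but combinatorial bookkeeping: we must work modulo translation (handled by the canonical origin) and deduplicate the output, since the same patch can be built by many different orderings of tile insertions in the tree. A final canonicalisation step, for example sorting the tiles of each candidate by position and comparing the resulting lists, removes duplicates. Completeness of the enumeration uses the fact that every simply connected patch is in particular connected and can therefore be grown edge by edge from any of its tiles. Correctness and termination then follow directly from FLC together with the computable bound $N$.
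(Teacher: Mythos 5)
Your proof is correct and follows essentially the same route as the paper: a finite combinatorial exploration of all candidate patches, with termination guaranteed by finite local complexity (via the bound on the support of a minimal radius $n$ pattern), followed by filtering out those containing a pattern of $\mathcal{F}$. You simply spell out the bookkeeping (the explicit tile-count bound, the tree search, and deduplication up to translation) that the paper leaves implicit.
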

\begin{proof}
	The algorithm is as follows:
	\begin{enumerate}
		\item By combinatorial exploration, try all possibilities and list all $\tileset$-patterns with minimal radius $n$. 
		\item Eliminate all listed patterns $x$ such that the colour-erased pattern $\pi(x)$ contains some pattern in $\mathcal F$.
		\item Output the remaining patterns.
	\end{enumerate}
	In Point 1, remember that the set of edge-to-edge tilings on a fixed finite set of rhombus tiles have finite local complexity, so this process terminates in finite time.
\end{proof}

\begin{proposition}[$\domino{X} \in \coRE$]
	\label{prop:coRE}
	For any shapeset $\shapeset$ and any susbhift $X$ on $\shapeset$ defined by a computable enumeration of forbidden patterns $\mathcal{F}$, $\domino{X}$ is co-computably enumerable.
\end{proposition}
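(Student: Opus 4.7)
The plan is to exhibit a semi-algorithm that halts on input $\tileset$ if and only if no valid $\tileset$-tiling of $\mathbb{R}^2$ with colour-erasure in $X$ exists. For each $n \in \mathbb{N}$, I apply Lemma~\ref{lemma:locally_allowed} with the first $n$ forbidden patterns of $\mathcal{F}$ to compute the finite set $\mathcal{A}_n$ of minimal-radius-$n$ $\tileset$-patterns whose colour-erasure avoids $f_0, \dots, f_{n-1}$. The semi-algorithm iterates $n = 1, 2, \dots$ and halts as soon as $\mathcal{A}_n = \emptyset$.

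For the easy direction, if there exists a valid $\tileset$-tiling $x$ with $\pi(x) \in X$, then for every $n$ the restriction of $x$ to a minimal-radius-$n$ patch around any fixed vertex belongs to $\mathcal{A}_n$, so the algorithm never halts.

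For the converse, I need a compactness argument: if $\mathcal{A}_n \neq \emptyset$ for every $n$, then there exists a valid $\tileset$-tiling $x$ with $\pi(x) \in X$. Pick a sequence $(p_n)_{n\in\mathbb N}$ with $p_n \in \mathcal{A}_n$, translated so that the distinguished vertex is at the origin. Since the shapeset is finite and edge-to-edge rhombus tilings have FLC, by the standard compactness of FLC tiling spaces in the tiling topology (see \cite{robinson2004,lutfalla2022}), one can extract a subsequence converging to a full tiling $x$ of $\mathbb{R}^2$. The support of $p_n$ contains a disk of radius $n$ around the origin, so the limit $x$ covers the whole plane; adjacency constraints on colours pass to the limit, so $x$ is a valid $\tileset$-tiling. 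Finally, for any $k$, the pattern $f_k$ cannot occur in $x$: if it did, it would occur in some bounded region, which would eventually lie inside the support of $p_n$ for some $n > k$, contradicting $p_n \in \mathcal{A}_n$. Hence $\pi(x)$ avoids every pattern in $\mathcal{F}$, that is, $\pi(x) \in X$.

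The main technical obstacle is the compactness step: one must ensure that the extracted limit is well-defined as a tiling of the whole plane and that the forbidden-pattern-avoidance property passes to the limit. Both points rely crucially on FLC (so that only finitely many local patterns appear in each bounded region, allowing diagonal extraction) and on the fact that minimal-radius-$n$ patterns have support containing an $n$-disk around a vertex (so that the limit tiles all of $\mathbb{R}^2$ and no forbidden pattern can hide at infinity). The rest of the proof is just bookkeeping around the semi-algorithm using Lemma~\ref{lemma:locally_allowed}.
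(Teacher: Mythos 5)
Your proposal is correct and follows essentially the same route as the paper: compute the finite sets $\mathcal{A}_n$ via Lemma~\ref{lemma:locally_allowed}, observe that a valid tiling exists iff all $\mathcal{A}_n$ are nonempty, and establish the nontrivial direction by FLC-compactness with the same argument that each forbidden pattern $f_k$ is excluded from the limit because it is excluded from every $p_n$ with $n>k$. The only cosmetic difference is that you phrase the conclusion as a semi-algorithm halting on negative instances, whereas the paper uses the equivalent $\forall n$ characterization of $\coRE$ via the decidable subproblem disk-tiling-$X$.
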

Note that, if $X$ is not effective, then $\domino{X}$ is $\coRE$ when provided with an enumeration of $\mathcal F$ as oracle.

This is essentially the same proof as for the classical Domino problem.
\begin{proof}
	The following problem, called disk-tiling-$X$, is decidable: 
	\begin{description}
		\item[Input] A finite set $\tileset$ of $\shapeset$-tiles and an integer $n$
		\item[Output] Is there a valid (finite) patch $x$ with tiles in $\tileset$ such that $\pi(x)$ is a rank $n$ locally allowed pattern of $X$, \emph{i.e.}, $\pi(x)\in \mathcal{A}_n(X)$? 
	\end{description}
	Simply compute the first $n$ forbidden patterns $\mathcal F_n$, which is possible because $X$ is effective, then apply Lemma \ref{lemma:locally_allowed} on $(n, \mathcal F_n, \tileset)$.
	For any input tileset $\tileset$, both the geometrical subshift $X$ and the symbolic full shift $X_\tileset$ have Finite Local Complexity so they are compact \cite{robinson2004}, and 
	\[ \domino{X}(\tileset) \Leftrightarrow \forall n \in \mathbb{N},\ \text{disk-tiling-}X(\tileset,n).\]
	
	Remark that $\text{disk-tiling-}X(\tileset,n) = 1$ when there exists a rank $n$ locally allowed pattern, that is, a pattern of minimal radius $n$ that avoids the first $n$ forbidden patterns $\mathcal{F}$ of $X$ with tileset $\tileset$.
	If $\forall n \in \mathbb{N},\ \text{disk-tiling-}X(\tileset,n)$, there exists a sequence $(p_n)_{n\in\mathbb{N}}$ with $\pi(p_n) \in \mathcal{A}_n(X)$. 
	Since the radius of the patches tends to infinity, by compacity there exists a limit tiling $x$ to which a subsequence converges. 
	Now remark that $\pi(x)\in X$ because it avoids all forbidden patterns in $\mathcal{F}$. Indeed, for any $k$, the $k$th forbidden pattern does not appear in any $\pi(p_n)$ for $n\geq k$, so it does not appear in $\pi(x)$.
	Since disk-tiling-$X$ is computable, we indeed have domino-$X \, \in \coRE$.\qedhere
\end{proof}
If $X$ is a full shift on some shapeset $\shapeset$, it is easy to see that the corresponding Domino problem is $\coRE$-hard by reduction to the classical version: given a finite set $\tileset$ of square Wang tiles, choose an arbitrary shape in $\shapeset$ and colour it like $\tileset$, and colour every other shape with four new different \emph{fresh} colours (so that any valid tiling may only use the first shape). In the rest of the paper, we extend this idea to work on an arbitrary subshift $X$.

\subsection{The Domino problem on shape uniformly recurrent subshifts is $\coRE$-hard}
The key concept is the concept of \emph{chains} of rhombuses \cite{kenyon1993} (also called \emph{ribbons} \cite{senechal1996}).
\begin{definition}[Chains of rhombuses]
	We call \emph{chain of rhombuses} a bi-infinite sequence of rhombuses that share an edge direction; see Figure \ref{fig:chain}. 
\end{definition}
A chain of rhombuses is characterized by its normal vector $\vec{v}$: the direction of the common edge. 
\begin{figure}[b]
	\center
	\includegraphics[width=0.6\textwidth]{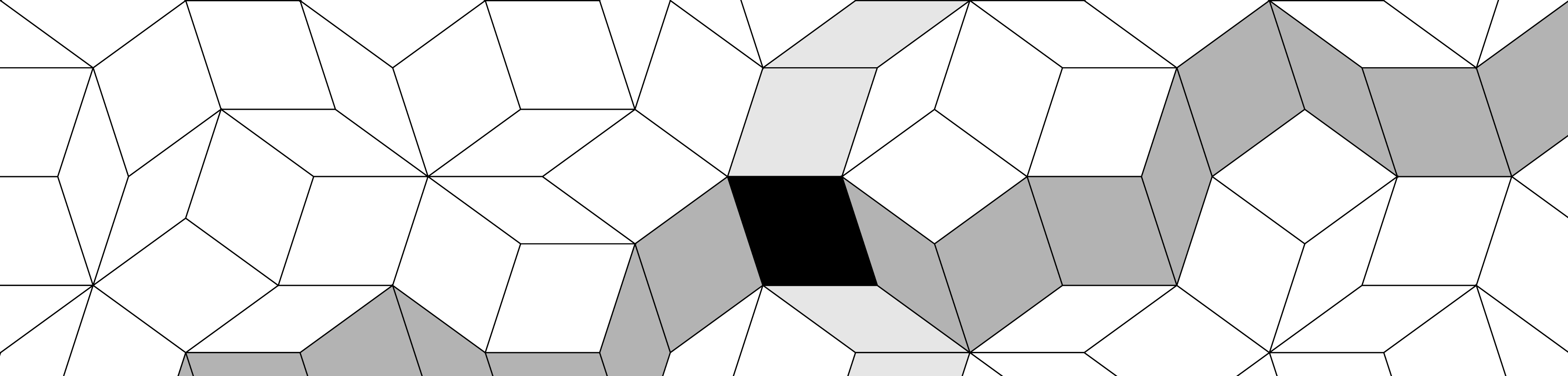}
	\caption{In an edge-to-edge rhombus tiling, a rhombus (in black) is at the intersection of two chains of rhombuses (in shades of grey).}
	\label{fig:chain}
\end{figure}

\begin{lemma}[Occurences of a rhombus, \cite{kenyon1993}]
	In an edge-to-edge rhombus tiling, rhombuses of edge directions $\vec{u}$ and $\vec{v}$ correspond exactly to the intersections of two chains of normal vectors $\vec{u}$ and $\vec{v}$. Moreover, two chains can cross at most once. See Fig.~\ref{fig:chain}.
	\label{lemma:chains} 
\end{lemma}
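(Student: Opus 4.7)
The proof naturally splits into two parts. For the first claim (every rhombus of direction pair $\{\vec u,\vec v\}$ is the intersection of a unique $\vec u$-chain and a unique $\vec v$-chain, and conversely), I start from a rhombus $r$ and observe that each of its two $\vec u$-parallel edges is shared with a unique neighbouring tile (the tiling is edge-to-edge and covers $\mathbb R^2$), which necessarily carries an edge parallel to $\vec u$. Iterating in both directions produces the $\vec u$-chain through $r$; uniqueness follows because each edge of the tiling lies in a single chain of its direction. The same argument for $\vec v$-edges yields the $\vec v$-chain. Conversely, any rhombus at the intersection of a $\vec u$-chain and a $\vec v$-chain must have edges in both directions.

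For the second claim I would introduce the \emph{midline} $\gamma_C$ of a $\vec v$-chain $C$: the bi-infinite polyline joining the midpoints of consecutive shared $\vec v$-edges. Its segment inside a rhombus $\rho\in C$ of direction pair $\{\vec v,\vec w\}$ has direction $\vec w$ and passes through the centre of $\rho$. The key intermediate property is that $\gamma_C$ is strictly monotone in direction $\vec v^\perp$: two consecutive rhombuses of $C$ lie on opposite sides of their shared $\vec v$-parallel edge, so $\gamma_C$ crosses it transversally and its $\vec v^\perp$-coordinate strictly advances at each step. In particular, $\gamma_C$ is injective and separates $\mathbb R^2$ into two half-planes. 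By the first part, midlines of distinct chains of the same direction are disjoint, since each tiling edge lies in a single chain and thus their supporting rhombuses are disjoint.

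To prove ``at most once'' I would argue by contradiction. Suppose $C_1$ (direction $\vec u$) and $C_2$ (direction $\vec v$) share at least two crossings, and choose a pair $r_1,r_2$ minimising the area of the bounded region $R$ enclosed by the corresponding arcs of $\gamma_{C_1}$ and $\gamma_{C_2}$. If any $\vec u\vec v$-rhombus lies strictly inside $R$, its $\vec v$-chain $C_2'$ differs from $C_2$, and by the disjointness of parallel midlines $\gamma_{C_2'}$ can enter and exit $R$ only through the $\gamma_{C_1}$-arc of $\partial R$; this yields two crossings of $C_1$ and $C_2'$ enclosing a strictly smaller region, contradicting minimality. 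The remaining case (no $\vec u\vec v$-rhombus strictly inside $R$) is settled by a local analysis at $r_1$, combining the monotonicity of both midlines and the directions of the arcs entering $R$. The delicate step is precisely this ``at most once'' claim: monotonicity alone is insufficient (two monotone curves in different directions can intersect many times), so the proof must combine it with the rigidity coming from the disjointness of parallel-direction chains.
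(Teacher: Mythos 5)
The paper itself gives no proof of this lemma: it is imported wholesale from Kenyon \cite{kenyon1993}, so there is no in-paper argument to compare yours against. On its own merits, your first part (each rhombus with edge directions $\vec u,\vec v$ lies on a unique $\vec u$-chain and a unique $\vec v$-chain, and conversely) is correct and standard, and so is your construction of the midline $\gamma_C$ together with its strict monotonicity in the direction $\vec v^{\perp}$ — the observation that consecutive rhombuses of a chain lie on opposite sides of their shared edge, so every segment of $\gamma_C$ advances in the same transverse direction, is exactly the paper's ``uniform monotonicity'' lemma and is sound (with finitely many shapes it even gives a uniform lower bound on each step, so $\gamma_C$ is properly embedded and separates the plane).

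The gap is in the ``at most once'' part, and it sits precisely where you acknowledge the difficulty. Two problems. First, your minimal-region induction minimises over crossings of the \emph{fixed} pair $(C_1,C_2)$, but the inductive step produces two crossings of a \emph{different} pair $(C_1,C_2')$; that does not contradict your minimality. You would have to minimise over all pairs of chains at once, and then justify that a minimum exists (area is not obviously well-founded here; counting the tiles meeting the enclosed region works). Second and more seriously, the base case — ``settled by a local analysis at $r_1$'' — is asserted, not carried out, and it is the entire content of the claim. The missing idea is a parity-of-crossing-signs argument: at every common rhombus of $C_1$ (normal $\vec u$) and $C_2$ (normal $\vec v$), the tangent of $\gamma_{C_1}$ is $\epsilon_1\vec v$ and that of $\gamma_{C_2}$ is $\epsilon_2\vec u$, where $\epsilon_1,\epsilon_2\in\{\pm 1\}$ are fixed once and for all by the monotonicity directions of the two midlines; hence every crossing has the same sign $\mathrm{sign}\bigl(\det(\epsilon_1\vec v,\epsilon_2\vec u)\bigr)$. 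On the other hand, $\gamma_{C_1}$ separates the plane, and between two consecutive crossings the arc of $\gamma_{C_2}$ stays on one side of $\gamma_{C_1}$, so consecutive crossings must have opposite signs. This rules out a second crossing outright, with no need for the minimal-region reduction or the case distinction on interior $\vec u\vec v$-rhombuses. Until this (or an equivalent) argument is written out, your proof of the second claim is incomplete.
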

As a consequence, two chains of same normal vector cannot cross, otherwise there would be an impossible flat rhombus at the intersection. Such chains are therefore called parallel.

\begin{lemma}[Uniform monotonicity]
	Given a finite shapeset $\shapeset$, let $\theta_{min}$ be the smallest angle in a rhombus of $\shapeset$.
	For any $\shapeset$-tiling $x$,
	for any rhombus $r$ appearing in a chain $c$ of normal vector $\vec{u}$,
	the chain $c$ is outside the cone centered in $r$ and of half-angle $\theta_{min}$ along $\vec{u}$; see Fig.~\ref{fig:uniform_monotonicity}.
\end{lemma}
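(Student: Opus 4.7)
The plan is to analyse the chain step by step. Let $c = (r_i)_{i \in \mathbb{Z}}$ with $r_0 = r$, and denote by $\vec{d}_i$ the displacement from the centre of $r_i$ to that of $r_{i+1}$. I aim to establish (i) a monotonicity property---the $\vec{u}^\perp$-components of all the $\vec{d}_i$ have the same sign, where $\vec{u}^\perp$ is any unit vector orthogonal to $\vec{u}$---and (ii) a uniform angle bound---each $\vec{d}_i$ makes an angle with $\vec{u}$ of at least $\theta_{min}$---and then combine them by summing displacements.

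For (i), each rhombus $r_i$ has exactly two $\vec{u}$-edges, which are parallel and lie on opposite sides of $r_i$. The rhombuses $r_{i-1}$ and $r_{i+1}$ are precisely the two neighbours of $r_i$ across its $\vec{u}$-edges; since the chain is a sequence of pairwise distinct rhombuses, they lie across \emph{different} edges, hence on opposite sides of $r_i$. After fixing an orientation of $\vec{u}^\perp$, this forces $\vec{u}^\perp \cdot \vec{d}_{i-1}$ and $\vec{u}^\perp \cdot \vec{d}_i$ to share a sign for every $i$; by induction, all the $\vec{d}_i$ share a common $\vec{u}^\perp$-sign, which I will take to be positive.

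For (ii), rhombuses have all four sides of equal length, and since consecutive rhombuses of $c$ share a $\vec{u}$-edge, all rhombuses of $c$ have a common edge length $\ell$. Writing $\vec{w}_i$ for the non-$\vec{u}$ edge vector of $r_i$ (oriented so that $\vec{u}^\perp \cdot \vec{w}_i > 0$), an elementary computation yields $\vec{d}_i = \frac{1}{2}(\vec{w}_i + \vec{w}_{i+1})$. The angle between $\vec{w}_i$ and $\vec{u}$ is one of the two angles of $r_i$ and therefore lies in $[\theta_{min}, \pi - \theta_{min}]$, so $\vec{u}^\perp \cdot \vec{w}_i \geq \ell \sin(\theta_{min})$ and $|\vec{u} \cdot \vec{w}_i| \leq \ell \cos(\theta_{min})$. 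Summing over $k$ consecutive steps and invoking (i) to prevent cancellation, the total displacement $\Delta_k$ from $r$ to $r_k$ satisfies $\vec{u}^\perp \cdot \Delta_k \geq k \ell \sin(\theta_{min})$ and $|\vec{u} \cdot \Delta_k| \leq k \ell \cos(\theta_{min})$, whence $|\vec{u} \cdot \Delta_k| / (\vec{u}^\perp \cdot \Delta_k) \leq \cot(\theta_{min})$. This is precisely the condition for $\Delta_k$ to lie outside the open cone of half-angle $\theta_{min}$ around the line through $r$ directed by $\vec{u}$, which is what was to be shown.

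The main obstacle I anticipate is formalising the monotonicity step (i) cleanly. It is geometrically clear from the figure but requires care with orientations and uses the global fact that the chain is a simple bi-infinite sequence (so that $r_{i-1} \neq r_{i+1}$), rather than just a local matching condition between adjacent tiles.
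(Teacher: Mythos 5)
The paper states this lemma without proof (it is presented as part of the chain/ribbon machinery attributed to \cite{kenyon1993}), so there is no in-paper argument to match yours against; what you have written is a correct, self-contained proof that fills that gap. Your decomposition into (i) a sign-coherence statement for the transverse components of the centre-to-centre displacements and (ii) a per-step angle bound, glued together by the identity $\vec d_i=\tfrac12(\vec w_i+\vec w_{i+1})$ and a triangle-inequality estimate on $\Delta_k$, is exactly the right way to make the figure rigorous, and the computation $|\vec u\cdot\Delta_k|/(\vec u^\perp\cdot\Delta_k)\le\cot\theta_{min}$ is correct. Two points are worth making explicit. First, the step ``they lie across different edges, hence on opposite sides of $r_i$'' is the real crux: it follows because the two $\vec u$-edges of $r_i$ lie on distinct parallel lines $L_-$ and $L_+$ bounding the strip containing $r_i$, and a convex tile sharing a full edge with $r_i$ on $L_\pm$ and having disjoint interior must lie entirely in the opposite closed half-plane of $L_\pm$; this is also what rules out $r_{i-1}=r_{i+1}$, so no appeal to a ``global'' simplicity hypothesis on the chain is needed (indeed your monotonicity then \emph{proves} the chain is simple). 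Second, the lemma's phrase ``the chain is outside the cone'' has to be read as you read it, i.e.\ as a statement about the centres (or any fixed reference point per tile) and an open cone: the union of the tiles themselves can enter the cone near its apex (already $r_{\pm1}$ has points at angle roughly $\theta_0/2$ from the axis), and the bound is attained with equality when the chain consists of translates of a single rhombus of angle $\theta_{min}$. Neither point affects the validity of your argument or its sufficiency for the way the lemma is used later (ordering of parallel chains along $\vec u$).
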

Overall, an edge-to-edge rhombus tiling can be decomposed as $d$ sets of parallel chains of rhombuses where $d$ is the number of edge directions.
Given an edge direction $\vec{u}$, the $\vec{u}$ chains can be indexed by either $\mathbb{Z}$, $\mathbb{N}$, $-\mathbb{N}$ or a finite integer interval in such a way that, starting from any position and moving along $\vec{u}$ one crosses the $\vec{u}$ chains in increasing order. 
\begin{figure}[t]
	\center\includegraphics[width=0.5\textwidth]{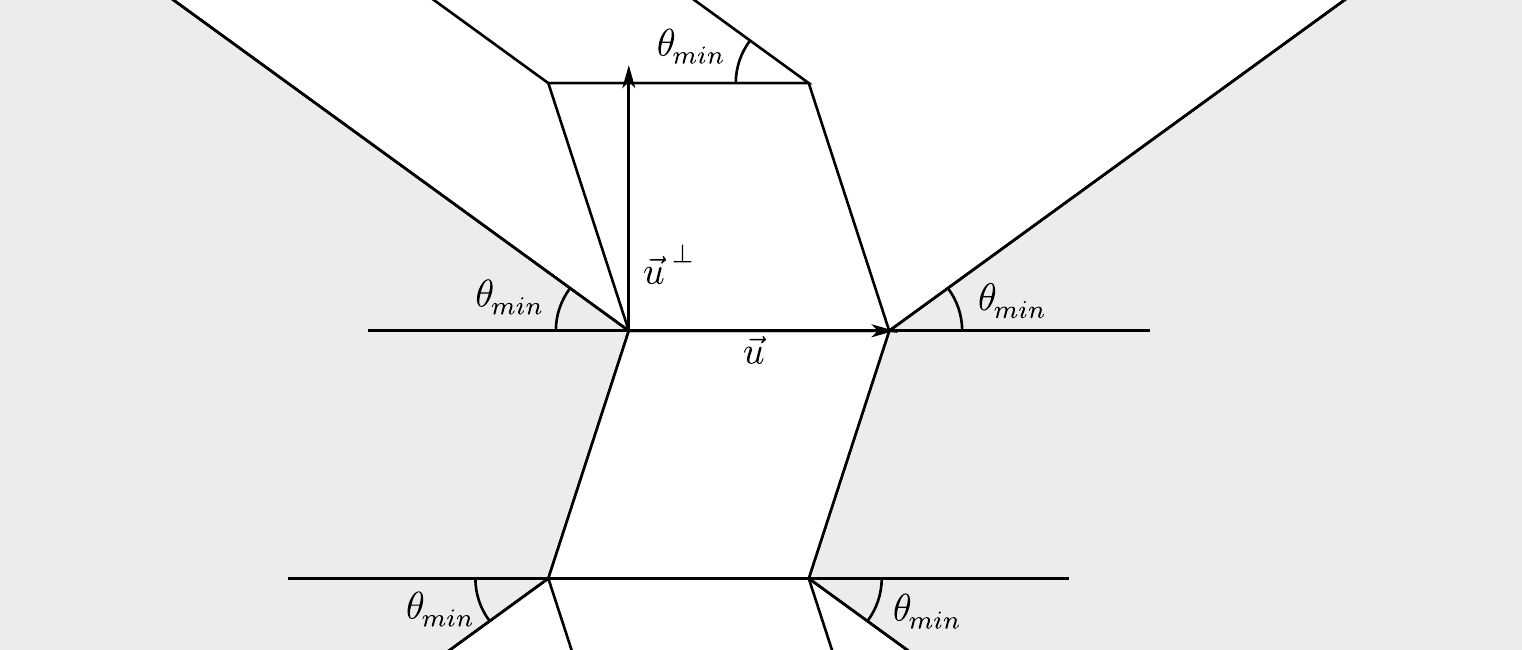}
	\caption{
	A rhombus $r$ and the chain $c$.
	The chain $c$ is outside the grey cones left and right of the rhombus $r$.}
	\label{fig:uniform_monotonicity}
\end{figure}

\begin{figure}[!b]
	\begin{subfigure}{0.55\textwidth}
		\center
		\includegraphics[width=\textwidth]{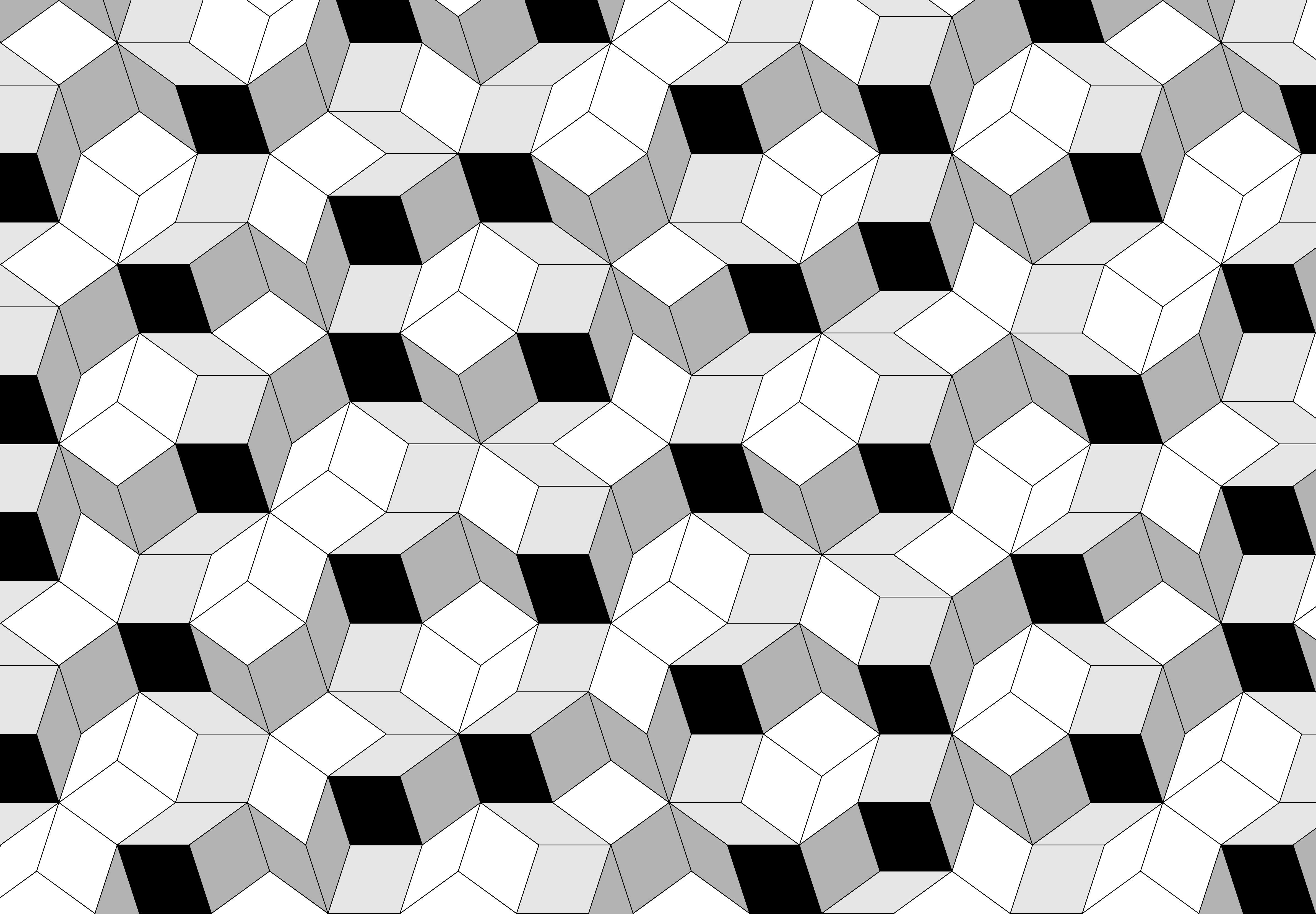}
		\caption{Occurences of $r$ (in black) in a Penrose tiling. 
		The chains that link the occurences of $r$ are highlighted in medium grey and light grey.
		}
		\label{fig:rhombus_ZZ2}
	\end{subfigure}
	\hfill
	\begin{subfigure}{0.4\textwidth}
		\center
		\includegraphics[width=\textwidth]{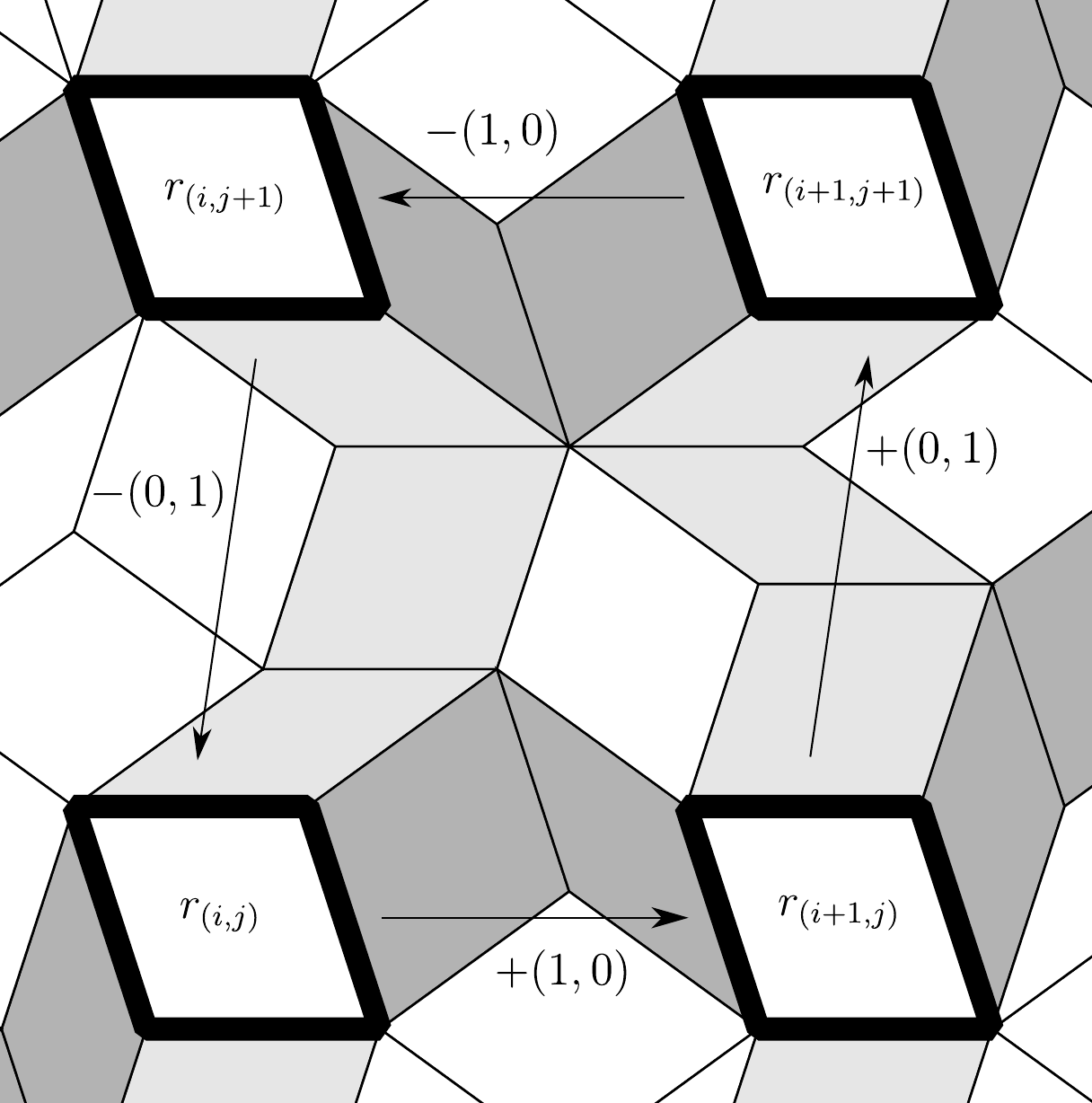}
		\caption{Indexing the occurences of $r$.}
		\label{fig:rhombus_ZZ2_details}
	\end{subfigure}
	\caption{Occurences of a uniformly recurrent rhombus $r$ in a tiling.}
\end{figure}

\begin{definition}[Shape uniform recurrence]
	Given a rhombus shape $r$ and a tiling $x$ we say that $r$ is uniformly recurrent in $x$, or that $x$ is $r$-uniformly recurrent, if $r$ appears in any disk of radius $R$ in $x$ for some $R$.
	
	A tiling $x$ is called shape uniformly recurrent when it is $r$-uniformly-recurrent for every shape $r$ that appears in $x$.
	
	A subshift $X$ is called shape uniformly recurrent when, for every shape $r$ that appears in some tiling $y \in X$, every tiling $x\in X$ is $r$-uniformly-recurrent.
\end{definition}
Note that this is much weaker than the usual uniform recurrence, which holds for every pattern instead of a single shape. 
\begin{lemma}
	Let $x$ be a edge-to-edge rhombus tiling and $r$ a shape that is uniformly recurrent in $x$.
	The occurences of $r$ in $x$ can be indexed by coordinates in $\mathbb{Z}^2$ such that two consecutive occurences of $r$ along a chain have adjacent $\mathbb{Z}^2$ coordinates; see Fig.~\ref{fig:rhombus_ZZ2_details}.
	\label{lemma:uniform}
\end{lemma}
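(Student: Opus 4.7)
Let $\vec u$ and $\vec v$ be the two edge directions of $r$, so that by Lemma~\ref{lemma:chains} every occurrence of $r$ in $x$ is the unique intersection of one $\vec u$-chain and one $\vec v$-chain. Write $U$ (resp.~$V$) for the set of $\vec u$-chains (resp.~$\vec v$-chains) carrying at least one occurrence of $r$. The plan is to (i) prove that $U$ and $V$ are order-isomorphic to $\mathbb Z$ for the chain orderings introduced just before the statement, (ii) prove that every pair $(C^u_i, C^v_j) \in U \times V$ actually crosses, and (iii) map each $r$-occurrence to the pair of indices of its two chains and verify the required adjacency property.

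For (i), I will show that $U$ has no maximum (the no-minimum statement, and those for $V$, follow by symmetry). Assume for contradiction that $U$ has a maximal chain $C^u_{\max}$, and fix any rhombus $r_\star$ on it. By uniform monotonicity, the open cone $K$ centered at $r_\star$, of half-angle $\theta_{\min}$ along $+\vec u$, is disjoint from $C^u_{\max}$; combined with the chain ordering, $K$ lies entirely on the $+\vec u$-side of $C^u_{\max}$, and it contains disks of arbitrarily large radius. The $r$-uniform recurrence then places an occurrence of $r$ inside $K$, which must sit on a $\vec u$-chain of strictly larger index than $C^u_{\max}$, contradicting maximality.

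Step (ii) is the main geometric obstacle. Because the per-step displacement along a $\vec u$-chain has a $\vec u^\perp$-component of constant sign, uniform monotonicity upgrades to: $C^u_i$ is a bi-infinite arc strictly monotone in its $\vec u^\perp$-coordinate, hence a Jordan arc separating $\mathbb R^2$ into two connected regions which, thanks to the chain ordering, are exactly the $\pm\vec u$-sides. The analogous statement holds for $C^v_j$. Since $\vec u \neq \vec v$, the two ends of $C^v_j$, which escape to infinity inside the cones around $\pm \vec v^\perp$ allowed by uniform monotonicity, end up on opposite sides of $C^u_i$. A Jordan / intermediate-value argument then forces $C^v_j$ to meet $C^u_i$, and by Lemma~\ref{lemma:chains} the intersection is a single crossing that is necessarily an $r$-occurrence.

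Finally, for (iii), define $\Phi(r') := (i,j)$ where $r' = C^u_i \cap C^v_j$. Injectivity comes from Lemma~\ref{lemma:chains} and surjectivity from (ii), giving a bijection between occurrences of $r$ and $\mathbb Z^2$. If $\Phi^{-1}(i, j_1)$ and $\Phi^{-1}(i, j_2)$ were consecutive occurrences of $r$ along $C^u_i$ with $|j_2-j_1| \ge 2$, any $\vec v$-chain with $V$-index strictly between $j_1$ and $j_2$ would, by (ii), cross $C^u_i$ at an occurrence of $r$ strictly between them, contradicting consecutivity; so $|j_2-j_1| = 1$, and the symmetric argument along $\vec v$-chains concludes.
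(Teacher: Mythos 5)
Your overall architecture is the same as the paper's (index each occurrence of $r$ by the pair of indices of the two chains through it), and in steps (i) and (iii) you are actually more careful than the paper, which essentially reads the adjacency property off the figure after setting up the chain indexing. Step (i) correctly turns uniform recurrence of $r$ into unboundedness of $U$ and $V$ in both directions, and step (iii) is a correct separation argument once (ii) is granted. The problem is step (ii), which is indeed the crux and is exactly the point the paper leaves implicit — but your argument for it does not work.

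The argument you give for (ii) uses only uniform monotonicity and $\vec u \neq \vec v$; it never invokes uniform recurrence of $r$ nor the hypothesis that the two chains belong to $U \times V$. If it were valid, it would prove that in \emph{any} edge-to-edge rhombus tiling, any two chains of distinct normal vectors cross. That is false: take three unit directions $\vec a, \vec b, \vec c$, tile the half-plane $\{x \le 0\}$ with the square grid on $\{\vec a,\vec b\}$ and the half-plane $\{x \ge 0\}$ with the grid of $(\vec b,\vec c)$-rhombuses, glued edge-to-edge along a common $\vec b$-line; every $\vec a$-chain is confined to the left half and every $\vec c$-chain to the right half, so no $\vec a$-chain meets any $\vec c$-chain. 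The precise failure in your reasoning is the sentence ``the two ends of $C^v_j$ \dots end up on opposite sides of $C^u_i$'': the cone of half-angle $\pi/2-\theta_{min}$ around $+\vec v^\perp$ that contains an end of $C^v_j$ is \emph{not} contained in a single complementary component of $C^u_i$ unless $\theta_{min}\ge\pi/4$ (or there are only two edge directions). As soon as a third edge direction $\vec w$ exists, an end of $C^v_j$ (built from $(\vec v,\vec w)$-rhombuses) and an end of $C^u_i$ (built from $(\vec u,\vec w)$-rhombuses) can both be asymptotic to $\vec w$, and the two ends of $C^v_j$ can then lie on the same side of $C^u_i$; no Jordan/intermediate-value argument applies. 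Any correct proof of (ii) must use the uniform recurrence of $r$ to rule this configuration out (e.g., by showing that a chain of $U$ disjoint from a chain of $V$ would leave a region containing arbitrarily large disks whose $r$-occurrences cannot be accounted for), and as written your proposal does not do this.
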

\begin{proof}
	Let $x$ be a tiling in $X$ and $r$ a uniformly recurrent shape in $x$.
	Denote by $\vec{u}$ and $\vec{v}$ the two edge directions of the rhombus $r$. 
	As explained in Lemma \ref{lemma:chains}, the occurences of $r$ are exactly the intersections of a $\vec{u}$ chain and a $\vec{v}$ chain.
	
	As seen above, the $\vec{u}$ chains can be indexed by either $\mathbb{Z}$, $\mathbb{N}$, $-\mathbb{N}$ or a finite integer interval. Since $r$ is uniformly recurrent, only the case of $\mathbb{Z}$ is possible.
	Indeed by uniform recurrence of $r$, there exists $R$ such that any disk of radius $R$ in $x$ contains an occurence of $r$, and in particular intersects a $\vec{u}$ chain. So, starting from any position, one finds arbitrarily many $\vec{u}$ chains in both the $\vec{u}$ and $-\vec{u}$ directions.
	The same holds for $\vec{v}$.
	
	Denote $r_{(i,j)}$ the occurence of $r$ at the intersection of the $i$th $\vec{u}$ chain and the $j$th $\vec{v}$ chain.
	By definition of the indexing of chains, we see on Fig.~\ref{fig:rhombus_ZZ2_details} that starting from occurence $r_{(i,j)}$ and going along a $\vec{u}$ chain, the next occurence of $r$ is $r_{(i+1,j)}$.\qedhere
\end{proof}

\begin{proposition}
	Let $X$ be a non-empty subshift of edge-to-edge rhombus tilings that is shape uniformly recurrent.
	$\domino{X}$ is $\coRE$-hard.
	\label{prop:core-hard}
\end{proposition}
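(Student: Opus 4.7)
The plan is to many-one reduce the classical Domino problem on $\mathbb{Z}^2$, which is $\coRE$-hard by Berger, to $\domino{X}$. Given a square Wang tileset $W$ on colour alphabet $C$, I build a $\shapeset$-tileset $\tileset_W$ such that $W$ admits a valid $\mathbb{Z}^2$-tiling iff some $\tileset_W$-tiling has projection in $X$. Since $X$ is non-empty, fix once and for all a shape $r \in \shapeset$ appearing in some tiling of $X$, and let $\vec u,\vec v$ be its two edge directions; by shape uniform recurrence together with Lemma~\ref{lemma:uniform}, for every $x \in X$ the $r$-occurrences are canonically indexed by $\mathbb{Z}^2$ as $(r_{(i,j)})_{(i,j) \in \mathbb{Z}^2}$, and this grid is what will carry the simulated Wang content.

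The tileset $\tileset_W$ uses a fresh ``blank'' colour $\star \notin C$ and is built as follows. For each Wang tile $(a_L,a_R,a_B,a_T) \in W$ include one $r$-copy with $a_L,a_R$ on the two $\vec v$-edges of $r$ and $a_B,a_T$ on the two $\vec u$-edges. For every other shape $s \in \shapeset$: if $\vec v$ is an edge direction of $s$, then for each $a \in C$ add an $s$-tile whose two $\vec v$-edges both carry $a$ and whose remaining edges carry $\star$; symmetrically for $\vec u$; if $s$ has neither $\vec u$ nor $\vec v$ as an edge direction, add a single tile with $\star$ everywhere. A rhombus is determined up to position by its unordered pair of edge directions, so no $s \neq r$ can have both $\vec u$ and $\vec v$ as edge directions, which makes the case analysis exhaustive.

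Correctness rests on Lemma~\ref{lemma:chains}: by construction, every non-$r$ tile carrying a $\vec v$-edge has the same colour on both its $\vec v$-edges, so a Wang colour is transmitted along any $\vec v$-chain (i.e., any chain whose shared edges are $\vec v$-edges), and symmetrically for $\vec u$-chains. Given a valid $\tileset_W$-tiling $y$ with $\pi(y) \in X$, reading off the Wang tile $w(i,j)$ placed on $r_{(i,j)}$ produces a map $w: \mathbb{Z}^2 \to W$; since $r_{(i,j)}$ and $r_{(i+1,j)}$ lie on a common $\vec v$-chain, the right-$\vec v$ colour of $r_{(i,j)}$ equals the left-$\vec v$ colour of $r_{(i+1,j)}$, which is exactly the horizontal Wang matching, and the vertical matching is symmetric via $\vec u$-chains. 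Conversely, given a valid $\mathbb{Z}^2$-tiling $w$ and any $x_0 \in X$, I place $w(i,j)$ on $r_{(i,j)}$, propagate each resulting Wang colour along its $\vec v$- or $\vec u$-chain (choosing an arbitrary colour on any chain that happens to contain no $r$-occurrence), and fill all remaining edges with $\star$; local matchings are immediate.

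The only mild subtlety I anticipate is the duality between the direction in which a chain extends and the direction of its shared edges, so that the Wang signal carried by $\vec v$-edges is indeed what interpolates between $r$-occurrences differing in the horizontal $\mathbb{Z}^2$-index; and checking that chains devoid of $r$-occurrences cause no trouble in the forward direction, which is harmless because their colour is unconstrained. The construction of $\tileset_W$ from $W$ is manifestly computable, which yields the required many-one reduction and hence the $\coRE$-hardness of $\domino{X}$.
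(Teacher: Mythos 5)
Your proposal is correct and follows essentially the same route as the paper: a many-one reduction from the classical Domino problem that codes Wang tiles on a uniformly recurrent shape $r$, transmits colours along chains via link tiles carrying the same colour on both shared-direction edges, and fills the remaining shapes with blank tiles, with correctness resting on Lemmas~\ref{lemma:chains} and~\ref{lemma:uniform}. Your explicit requirement that $r$ appear in some tiling of $X$ (rather than being an arbitrary element of $\shapeset$) is a small but sensible precaution that the paper's wording glosses over.
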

\begin{proof}
	We proceed by many-one reduction to the classical Domino problem which is known to be $\coRE$-complete \cite{berger1966}.
	\begin{figure}[b]
		\center
		\includegraphics[width=0.7\textwidth]{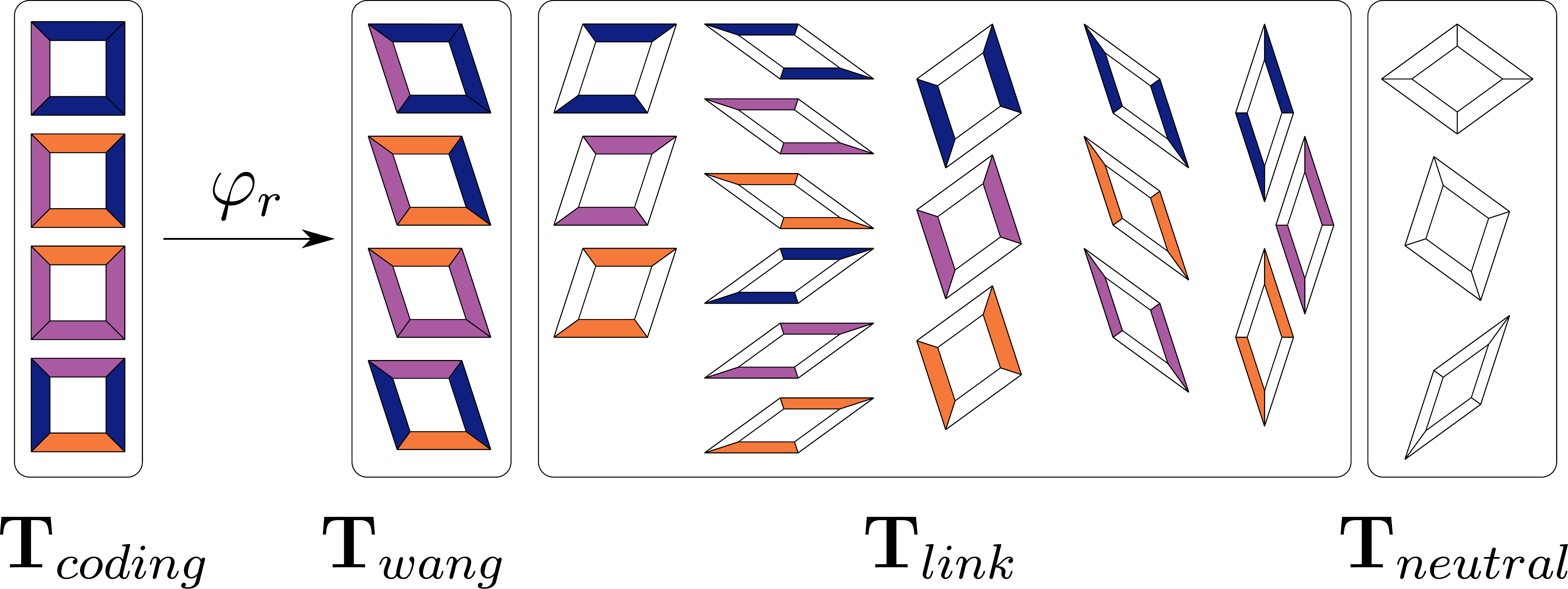}
		\caption{The reduction $\reduction_r$.}
		\label{fig:reduction_r}
	\end{figure}
	
	Let $\shapeset$ be the shapeset on which $X$ is defined, and choose an arbitrary shape $r\in \shapeset$. Define the reduction $\reduction_r$ as follows. We are given as input a finite set of square tiles $\tileset_{wang}$ on the set of coulours $S_{wang}$. 
	Define a tileset $\tileset_{rhombus} = \reduction_r(\tileset_{wang})$ on colours $S_{rhombus} := S_{wang} \cup \{ \blank \}$, where $\blank$ is a fresh colour, as $\tileset_{rhombus} := \tileset_{coding} \cup \tileset_{link} \cup \tileset_{neutral}$, with:
	\begin{enumerate}
		\item $\tileset_{coding}$ is a copy of $\tileset_{wang}$ on the shape $r$. Formally, for each tile $(a_0, a_1, a_2, a_3) \in \tileset_{wang}$, $\tileset_{coding}$ contains a tile $(r, a_0, a_1, a_2, a_3)$. See Fig.~\ref{fig:reduction_r}.
		\item $\tileset_{link}$ is a set of rhombus tiles that link occurences of $r$ and transmit the colours. Formally, for each shape $r' \neq r$ such that $r$ and $r'$ share exactly one edge direction, say $\vec{u}$, and for each colour $a\in S_{wang}$ 
		$\tileset_{link}$ contains a tile of shape $r'$ with colour $a$ on both edges along $\vec{u}$ edges and colour $\blank$ on other edges. See Fig.~\ref{fig:reduction_r}.
		\item $\tileset_{neutral}$ completes the tileset with $\blank$ colour. Formally, for each shape $r'$ that shares no edge direction with $r$, $\tileset_{neutral}$ contains one tile of shape $r'$ with $\blank$ colour on each edge. See Fig.~\ref{fig:reduction_r}.
	\end{enumerate}
	We prove that $\tileset_{wang}$ admits a valid tiling of $\mathbb{Z}^2$ if and only if $\reduction_r(\tileset_{wang})$ admits a valid tiling $x$ such that $\pi(x)\in X$. 
	
	\begin{figure}[!b]
		\center
		\includegraphics[width=0.7\textwidth]{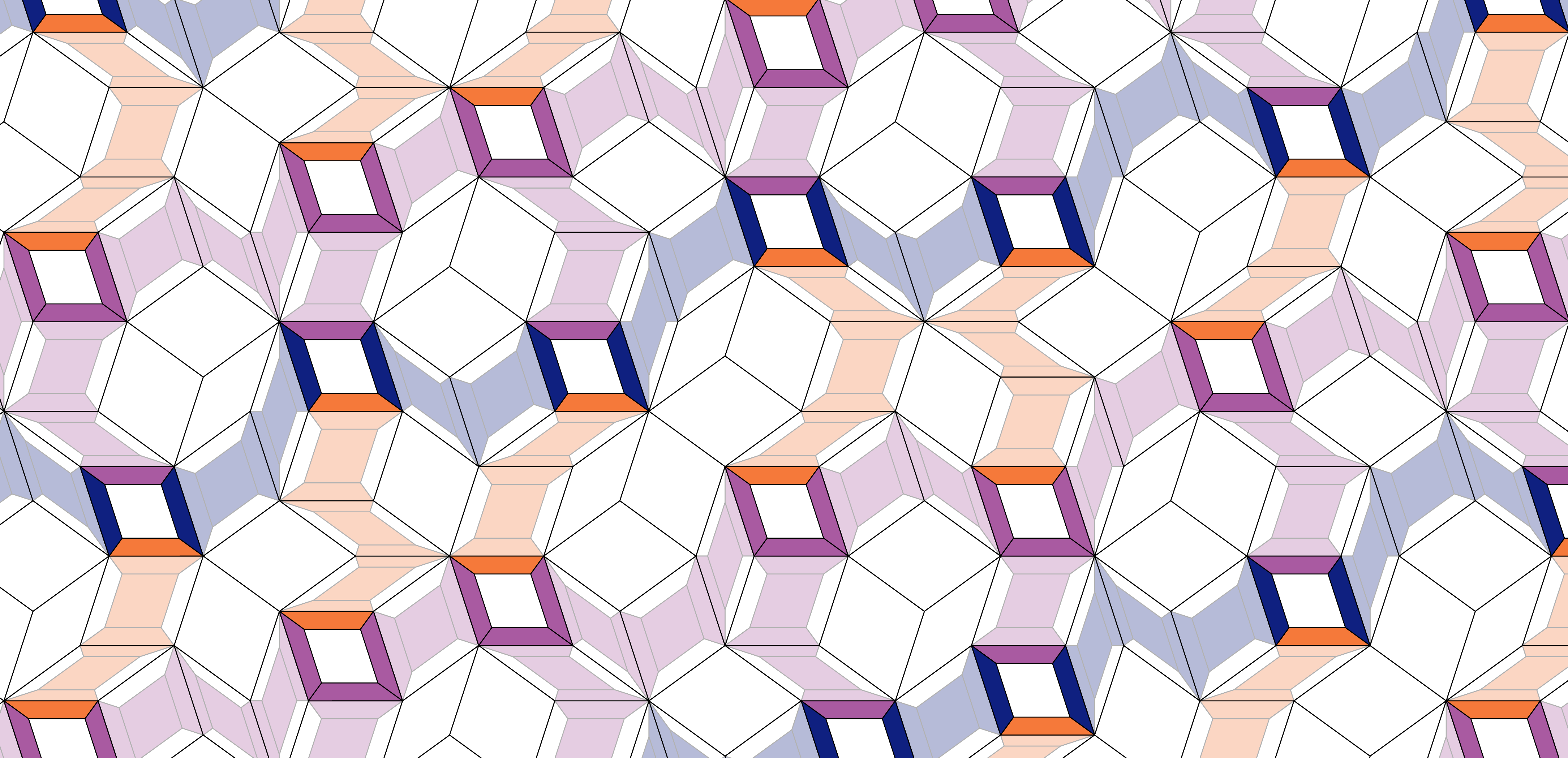}
		\caption{A valid tiling $x$ with tileset $\reduction_r(\tileset_{wang})$.
		Link tiles and neutral tiles have been visually modified to improve readability.}
		\label{fig:reduced_tiling}
	\end{figure}
	Assume that $\tileset_{wang}$ admits a valid $x_{wang}$ tiling of $\mathbb{Z}^2$.
	Let us pick $x\in X$.
	By hypothesis, $r$ is uniformly recurrent in $X$. We colour $x$ as follows (see Fig.~\ref{fig:reduced_tiling}):
	\begin{description}
		\item[Coding tiles:] index occurences of $r$ as $\{r_{(i,j)}| (i,j) \in \mathbb{Z}^2\}$ as explained in Lemma \ref{lemma:uniform}.
		Copy the tiles from $x_{wang}$ to the occurences of $r$, \emph{i.e.}, if at position $(i,j)$ in $x_{wang}$ there is a $(a_0,a_1,a_2,a_3)$ tile, then colour $r_{(i,j)}$ as the coding tile $(r_{(i,j)}, a_0, a_1, a_2, a_3)$.
		\item[Linker tiles:] by construction, the north colour of $r_{(i,j)}$ is equal to the south colour of $r_{(i,j+1)}$ so linkers of that colour can be put along that portion of chain, and similarly for east-west links. 
		\item[Neutral tiles:] remaining tiles share no edge direction with $r$, so they must be neutral tiles.
	\end{description}
	The converse also holds: if there exists a valid tiling $x$ on $\reduction_r(\tileset_{wang})$, since $r$ is uniformly recurrent its occurences can be indexed by $\mathbb{Z}^2$ (Lemma \ref{lemma:uniform}). By construction of the linker tiles, colours of the coding tiles $r_{(i,j)}$ correspond to a valid $\tileset_{wang}$ tiling in $\mathbb{Z}^2$. 
\end{proof}

\subsection{The Domino problem on any rhombus subshift is $\coRE$-hard}
In this section, for any subshift $X$, we build a subshift $X'$ that has a uniformly recurrent shape and such that $\domino{X'}\manyone \domino{X}$.

\begin{definition}[Restriction]
	Given any rhombus subshift $X$ on shapeset $\shapeset$, and a subset $\shapeset' \subset \shapeset$, define $\rho_{\shapeset'}(X)$ as the restriction of $X$ to the configurations that contain only shapes in $\shapeset'$, that is, $\rho_{\shapeset'}(X) := X \cap X_{\shapeset'}$.
\end{definition}

\begin{lemma}
	Let $X$ be a nonempty subshift on a shapeset $\shapeset$. 
	For any $r\in \shapeset$, either $r$ is uniformly recurrent in every $x\in X$, 
	or the restriction $\rho_{\shapeset\setminus\{r\}}(X)$ is nonempty.
	\label{lemma:reduction}
\end{lemma}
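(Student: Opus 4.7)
The plan is to prove the contrapositive. Assuming some $x \in X$ fails to be $r$-uniformly recurrent, I would construct a tiling $x' \in \rho_{\shapeset\setminus\{r\}}(X)$, showing that the restriction is nonempty.

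First, unfolding the negation of uniform recurrence, for every $n \in \mathbb{N}$ there exists a point $p_n \in \mathbb{R}^2$ such that the disk of radius $n$ around $p_n$ in $x$ meets no occurrence of the shape $r$. Since subshifts are translation invariant, the translated tiling $x_n := x - p_n$ lies in $X$ and has the property that the disk $D(0, n)$ centered at the origin contains no copy of $r$.

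Next, I would invoke compactness: $X$ is closed inside the full shift $X_\shapeset$, which is compact for the tiling topology by the Finite Local Complexity hypothesis (as recalled in the excerpt). Hence one can extract a subsequence $x_{n_k}$ converging to some $x' \in X$. By definition of the tiling topology, for every radius $R > 0$ and for $k$ large enough, $x'$ coincides with $x_{n_k}$ on $D(0, R)$; picking moreover $n_k \geq R$ ensures that $D(0, R)$ in $x'$ contains no occurrence of $r$. Since $R$ is arbitrary and every shape of $x'$ sits in some bounded ball around the origin, $x'$ uses no copy of $r$ at all. Therefore $x' \in X \cap X_{\shapeset\setminus\{r\}} = \rho_{\shapeset\setminus\{r\}}(X)$, which is consequently nonempty.

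The step I would watch most carefully is the limit argument: one needs the "no $r$ in $D(0,R)$" property to be preserved under the tiling topology. This is where FLC is essential, since it makes the set of shape occurrences in a bounded region a locally constant feature of nearby tilings, so the property carries from $x_{n_k}$ to $x'$. The rest of the argument is a standard compactness-plus-diagonalization, and the dichotomy in the statement is exactly the logical dichotomy "every $x \in X$ is $r$-uniformly recurrent" versus "some $x \in X$ is not", handled by the construction above.
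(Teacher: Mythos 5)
Your proof is correct and follows essentially the same route as the paper: negate uniform recurrence to obtain arbitrarily large disks avoiding $r$, then use translation invariance, FLC-compactness, and closedness of $X$ to extract a limit tiling in $X$ containing no occurrence of $r$. The paper's version is just a terser statement of the same compactness argument, which you have spelled out in full detail.
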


\begin{proof}
	This proof, once again, comes from finite local complexity and therefore compactness of edge-to-edge rhombus tilings.
	
	If $r$ is not uniformly recurrent in every $x \in X$, by definition there exist arbitrarily large patterns in $X$ that do not contain $r$. 
	By compactness there exists an infinite tiling in $X$ containing no $r$. 
	Hence $\rho_{\shapeset\setminus\{r\}}(X)$ is non-empty.
\end{proof}

\begin{lemma}
	Let $X$ be a nonempty subshift on a finite set of rhombuses $\shapeset$. There is a subset $\shapeset'\subseteq \shapeset$ such that $\rho_{\shapeset'}(X)\neq \emptyset$, and there exists $r \in \shapeset'$ which is uniformly recurrent in every configuration $x\in \rho_{\shapeset'}(X)$.
	\label{lemma:subsubshift}
\end{lemma}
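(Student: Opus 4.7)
The idea is to obtain $\shapeset'$ via a minimality argument and then apply Lemma~\ref{lemma:reduction} once. Consider the family
\[\mathcal{C} := \{\shapeset'' \subseteq \shapeset \;:\; \rho_{\shapeset''}(X) \neq \emptyset\}.\]
Since $X$ is nonempty, $\shapeset \in \mathcal{C}$, so $\mathcal{C}$ is nonempty; as $\shapeset$ is finite, $\mathcal{C}$ admits an element $\shapeset'$ that is minimal for inclusion. I will show that every $r \in \shapeset'$ already satisfies the conclusion, which is stronger than what the statement requires.

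First, $\shapeset'$ is nonempty, because tilings of $\mathbb{R}^2$ must use at least one shape, so $\rho_{\emptyset}(X) = X \cap X_{\emptyset} = \emptyset$. Fix an arbitrary $r \in \shapeset'$ and apply Lemma~\ref{lemma:reduction} to the nonempty subshift $\rho_{\shapeset'}(X)$ regarded as a subshift on the shapeset $\shapeset'$. The dichotomy gives two alternatives: either $(i)$ $r$ is uniformly recurrent in every $x \in \rho_{\shapeset'}(X)$, in which case the proof is complete; or $(ii)$ $\rho_{\shapeset' \setminus \{r\}}(\rho_{\shapeset'}(X)) \neq \emptyset$. The key observation is that the double restriction collapses: since $X_{\shapeset' \setminus \{r\}} \subseteq X_{\shapeset'}$, one has
\[\rho_{\shapeset' \setminus \{r\}}(\rho_{\shapeset'}(X)) \;=\; X \cap X_{\shapeset'} \cap X_{\shapeset' \setminus \{r\}} \;=\; X \cap X_{\shapeset' \setminus \{r\}} \;=\; \rho_{\shapeset' \setminus \{r\}}(X).\]
Thus alternative $(ii)$ would imply $\shapeset' \setminus \{r\} \in \mathcal{C}$, contradicting the minimality of $\shapeset'$. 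Hence $(i)$ holds.

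The genuine content — a compactness argument producing a limit tiling that avoids a non-uniformly-recurrent shape — has already been absorbed into Lemma~\ref{lemma:reduction}, so no additional obstacle arises here. The remaining work is a purely combinatorial descent on the finite lattice of subsets of $\shapeset$, and the only subtlety to check carefully is the set-theoretic collapse of the nested restrictions displayed above.
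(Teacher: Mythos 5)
Your proof is correct and is essentially the paper's argument in different clothing: the paper runs an explicit induction on $\#\shapeset$, peeling off one non-uniformly-recurrent shape at a time via Lemma~\ref{lemma:reduction}, while you take a minimal element of the family of subsets with nonempty restriction and apply the same lemma once — the two are the same finite descent, and your set-theoretic collapse $\rho_{\shapeset'\setminus\{r\}}(\rho_{\shapeset'}(X)) = \rho_{\shapeset'\setminus\{r\}}(X)$ is the (correct) step that makes the repackaging work. As a small bonus, your version yields the slightly stronger conclusion that \emph{every} $r\in\shapeset'$ is uniformly recurrent in $\rho_{\shapeset'}(X)$, not just some $r$.
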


\begin{proof}
	We prove this by induction on the number of shapes, \emph{i.e.}, on $\#\shapeset$.
	
	If $\shapeset = \{ r\}$, take $\shapeset' = \shapeset$ and $r$ is uniformly recurrent in any tiling $x\in X$.
	
	Assume the result holds for at most $n$ shapes and $\#\shapeset = n+1$. 
	Pick some $r\in \shapeset$: by Lemma \ref{lemma:reduction} either $r$ is uniformly recurrent in all tilings $x\in X$ or $\rho_{\shapeset\setminus\{r\}}(X)$ is non-empty. 
	In the first case, we conclude with $\shapeset' = \shapeset$.
	In the second case, we apply the induction hypothesis on $\rho_{\shapeset\setminus\{r\}}(X)$, obtaining some $\shapeset' \subseteq \shapeset\setminus\{r\}$ and a tile $r'\in\shapeset'$ such that $r'$ is uniformly recurrent in $\rho_{\shapeset'}(X)$.
\end{proof}

\begin{theorem}
	Let $X$ be a non-empty subshift of edge-to-edge rhombus tilings. 
	$\domino{X}$ is $\coRE$-hard.
\end{theorem}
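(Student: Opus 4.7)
The plan is to combine Lemma \ref{lemma:subsubshift} with Proposition \ref{prop:core-hard} by composing two many-one reductions, establishing the chain classical $\texttt{Domino} \manyone \domino{X'} \manyone \domino{X}$ for an auxiliary subshift $X'$ with a uniformly recurrent shape.

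First I would apply Lemma \ref{lemma:subsubshift} to $X$ to obtain a subset $\shapeset' \subseteq \shapeset$ such that $X' := \rho_{\shapeset'}(X)$ is non-empty and some shape $r \in \shapeset'$ is uniformly recurrent in every configuration of $X'$. Proposition \ref{prop:core-hard} then delivers $\coRE$-hardness of $\domino{X'}$: although that proposition is stated under the full shape-uniform-recurrence hypothesis, an inspection of its proof shows that only the uniform recurrence of the single shape $r$ fixed at the start of the reduction $\reduction_r$ is actually used, and that is exactly what Lemma \ref{lemma:subsubshift} provides.

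It then remains to build a reduction $\psi$ from $\domino{X'}$ to $\domino{X}$. Given a $\shapeset'$-tileset $\tileset$, I would define $\psi(\tileset) := \tileset \cup \tileset_{\mathrm{fresh}}$, where $\tileset_{\mathrm{fresh}}$ contains, for each shape $r'' \in \shapeset \setminus \shapeset'$, a single tile of shape $r''$ whose four edges carry four pairwise distinct fresh colours (absent from $\tileset$). Any valid $\psi(\tileset)$-tiling $x$ must avoid $\tileset_{\mathrm{fresh}}$ entirely, because no fresh colour appears on any other edge of $\psi(\tileset)$ — not even on another copy of the same fresh tile — so no fresh tile can match along a shared edge. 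Hence $x$ uses only tiles of $\tileset$, which forces $\pi(x) \in X_{\shapeset'}$; combined with the $\domino{X}$ constraint $\pi(x) \in X$, we obtain $\pi(x) \in X \cap X_{\shapeset'} = X'$. The converse implication is trivial since $\tileset \subseteq \psi(\tileset)$ and $X' \subseteq X$, so $\psi$ is a valid many-one reduction and composing with $\reduction_r$ finishes the proof.

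The only real subtlety lies in the fresh-colour construction: one must use four \emph{pairwise distinct} fresh colours per shape rather than a single $\blank$-like colour, otherwise two copies of a fresh tile could become adjacent and the reduction would fail. Everything else is a clean assembly of the previously established pieces, and I do not anticipate any further obstacle.
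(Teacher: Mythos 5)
Your proof is correct and follows essentially the same route as the paper's: Lemma \ref{lemma:subsubshift} to extract $X' = \rho_{\shapeset'}(X)$ with a uniformly recurrent shape, Proposition \ref{prop:core-hard} for $\coRE$-hardness of $\domino{X'}$, and the fresh-tile padding reduction $\tileset \mapsto \tileset \cup \tileset_{\mathrm{fresh}}$ giving $\domino{X'} \manyone \domino{X}$. Your two side remarks --- that Proposition \ref{prop:core-hard} really only needs uniform recurrence of the single shape chosen for $\reduction_r$, and that the fresh colours must be pairwise distinct within each added tile --- are both accurate and, if anything, slightly more careful than the paper's own write-up.
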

This result, with Proposition~\ref{prop:coRE}, implies that $\domino{X}$ is $\coRE$-complete for effective subshifts $X$.

\begin{proof}
	By Lemma \ref{lemma:subsubshift} there exists a subset of the shapeset $\shapeset' \subset \shapeset$ such that the subshift $X' := \rho_{\shapeset'}(X)$ has a uniformly recurrent tile $t$.
	\begin{figure}[t]
		\center
		\includegraphics[width=0.6\textwidth]{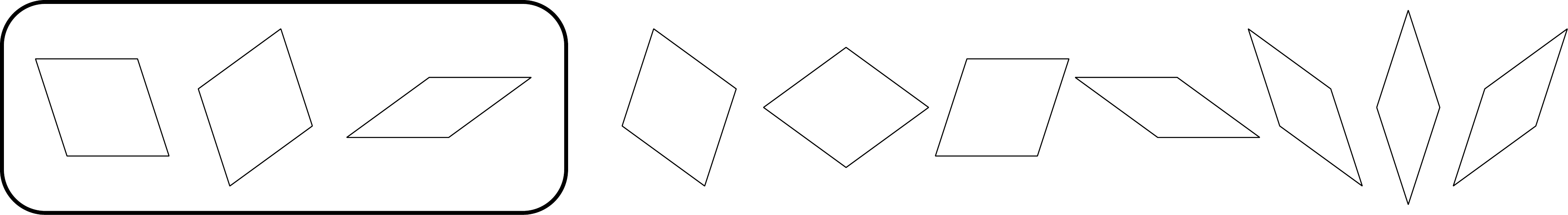}
		\caption{A set of shapes $\shapeset$, and in the left box the subset $\shapeset'$.}
		\label{fig:subshapeset}
	\end{figure}
	
	By Proposition \ref{prop:core-hard}, $\domino{X'}$ is $\coRE$-hard. 
	We now show that $\domino{X'} \manyone \domino{X}$ so that $\domino{X}$ is also $\coRE$-hard.
	
	The many-one reduction $\reduction_{\shapeset'}$ from $\domino{X'}$ to $\domino{X}$ is defined as follows: given a $\shapeset'$-wang tileset  $\tileset'$ we define $\reduction_{\shapeset'}(\tileset'):= \tileset' \cup \tileset_{fresh}$ where $\tileset_{fresh}$ contains a tile with a \emph{fresh} colour on each edge for each shape in $\shapeset \setminus \shapeset'$; see Figure \ref{fig:reduction_projection}. 
	
	\begin{figure}[b]
		\center
		\includegraphics[width=0.8\textwidth]{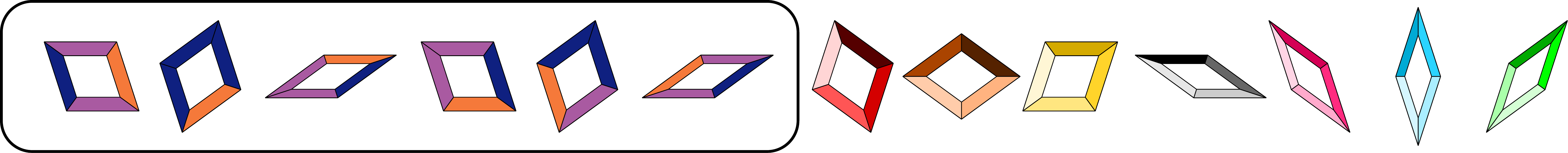}
		\caption{The reduction from a $\shapeset'$-tileset to a $\shapeset$-tileset : in the left box the original $\shapeset'$-tileset, on the right the tiles with fresh colours $\tileset_{fresh}$.}
		\label{fig:reduction_projection}
	\end{figure} 
	
	Remark that there is no reason that choosing the suitable $\shapeset'$ given $\shapeset$ and $X$ (as an enumeration of forbidden patterns) can be done in a computable manner.
	It only matters that, for a fixed $\shapeset'$, $\reduction_{\shapeset'}$ is computable, which is easily seen with the above definition.

	This reduction $\reduction_{\shapeset'}$ is well defined as we have 
	\[\domino{X'}(\tileset') \Leftrightarrow \domino{X}(\reduction_{\shapeset'}(\tileset')).\]
	
	The implication holds because $X'\subset X$ and $\tileset'\subset \reduction_{\shapeset'}(\tileset')$, so a $\tileset'$-tiling that projects by erasing colours in $X'$ is also a $\reduction_{\shapeset'}(\tileset')$-tiling that projects in $X$. 
	
	The converse holds because the tiles in $\tileset_{fresh}$ cannot appear in $\reduction_{\shapeset'}(\tileset')$-tilings because they have a fresh colour on each side so no tile can be placed next to it. 
	So a $\reduction_{\shapeset'}(\tileset')$-tiling $x$ is actually a $\tileset'$-tiling.
	Since $x$ contains only tiles in $\tileset'$, $\pi(x)$ contains only shapes in $\shapeset'$ so $\pi(x) \in \rho_{\shapeset'}(X) = X'$.
\end{proof}

\begin{remark}[Fresh colours]
	Remark that, if we remove the condition that a tileset on shapeset $\shapeset$ must contain at least a tile for each shape $r\in\shapeset$, we can take $\reduction(\tileset') := \tileset'$. In essence the fact of taking fresh colours simulates that.
\end{remark}

\begin{remark}[Restriction]	
	We could consider, instead of taking $X'$ as a restriction $\rho_{\shapeset'}(X)$, taking an arbitrary minimal subshift $X''\subset X$ (where all patterns are uniformly recurrent).
	However, there is no clear reduction from $\domino{X''}$ to $\domino{X}$.
\end{remark}

\begin{remark}[Beyond $\coRE$]
	If $X$ is not given by a computable enumeration of forbidden patterns $\mathcal F$, but $\mathcal F$ is given as an oracle, we remarked earlier that $\domino{X}$ is $\coRE$ relative to the oracle $\mathcal F$.
	However, we have only proved that $\domino{X}$ is $\coRE$-hard, but not relative to $\mathcal F$.
	In particular we have not proved that $\domino{X}$ is $\coRE$-complete relative to the oracle $\mathcal F$.
\end{remark}

\bibliography{dominiangelos}

\appendix
\section{Rhombus Wang tiles to define Penrose tilings}
\label{appendix:penrose}
Rhombus Penrose tilings were originally defined as jigsaw-type tiles with indentations on their edges \cite{penrose1974} which then reformulated as rhombus tiles with arrows as labels on their edges \cite{debruijn1981}.
With the arrow labels, two adjacent tiles must have the same type and direction of arrow on their shared edge. In both the original and modern definitions, the tiles are defined up to isometry.

These arrow labels are not strictly speaking the same as puting a single colour on each edge as for Wang tiles. However, if we define the tileset up to translation and not up to isometry we can translate these arrow labels as single colours, see Fig.~\ref{fig:arrow_to_colour}.

\begin{figure}[htp]
	\center
	\includegraphics[width=0.8\textwidth]{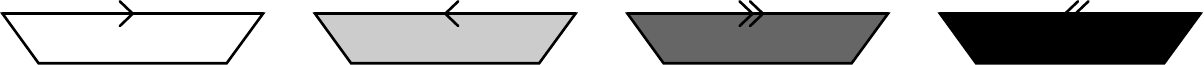}
	\caption{Given an edge direction, each type of arrow is translated into two colours.\\
	By rotating this figure by multiples of $\tfrac{2\pi}{5}$, one otbains the arrow-colour translation for all directions. }
	\label{fig:arrow_to_colour}
\end{figure}

This process gives us a tileset shown in Fig.~\ref{fig:penrose_rules_wang} that defines the subshift of Penrose tilings.

\begin{figure}[htp]
	\includegraphics[width=\textwidth]{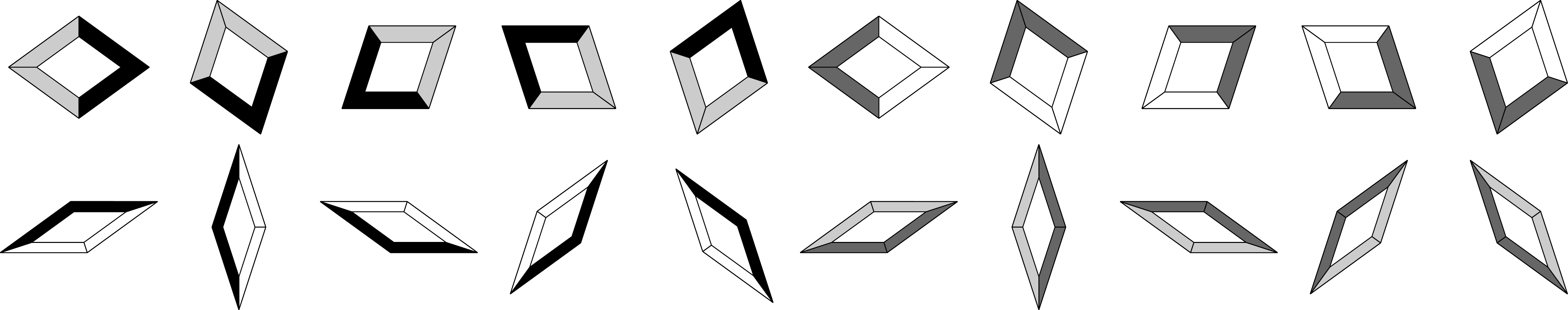}
	
	\caption{A rhombus Wang tileset definition of Penrose tilings.}
	\label{fig:penrose_rules_wang}
\end{figure}

This tileset contains 20 tiles up to translation, which can be considered as less elegant as the 2 tiles up to isometry of the classical definition. Note however that if we use Wang-type colours on the edges, there exists no aperiodic up-to-isometry tileset, indeed a single tile up-to-isometry always tiles the plane in a periodic way (see Fig.~\ref{fig:up_to_isometry}). This implies that if we want to define Penrose tilings with Wang rhombus tiles, we cannot use and up-to-isometry or up-to-rotation tileset.

\begin{figure}[htp]
	\center
	\includegraphics[width=0.8\textwidth]{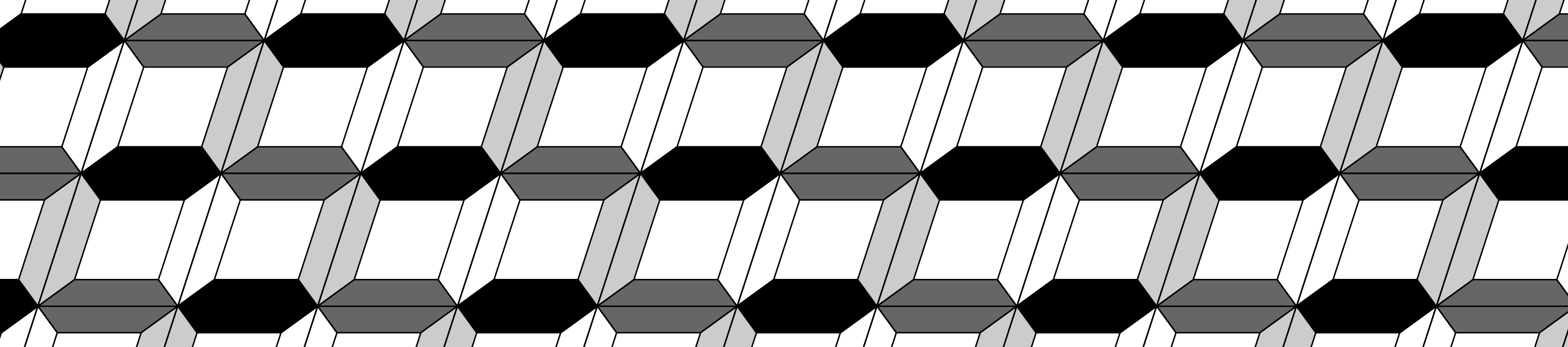}
	\caption{Given a rhombus Wang tile with different colours on all four edges, there exists a periodic tiling of the plane with only this tile up to translation and rotation (of angle $\pi$).}
	\label{fig:up_to_isometry}
\end{figure}

However, if we consider the Penrose tileset of Fig.~\ref{fig:penrose_rules_wang} up to $\tfrac{2\pi}{5}$ rotations we have a reduced tileset with 4 tiles shown in Figure \ref{fig:penrose_rules_wang_2pi5}. 
Note that, in this definition, tiles can only be rotated by multiples of $\tfrac{2\pi}{5}$ so the counterexample of Fig.~\ref{fig:up_to_isometry} does not apply as a rotation of angle $\pi$ is necessary for this counterexample.

\begin{figure}[htp]
	\center
	\includegraphics[width=0.5\textwidth]{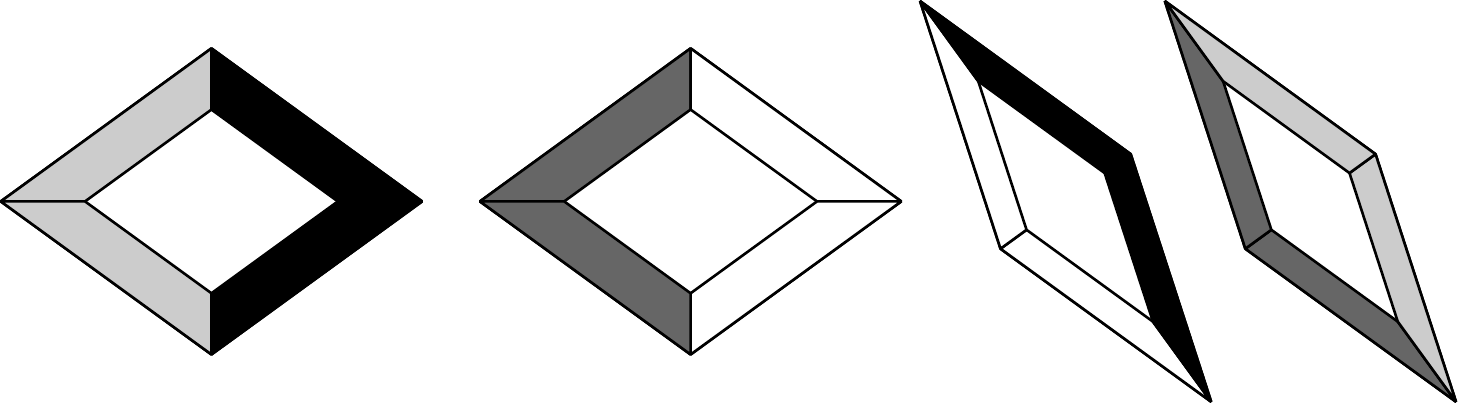}
	\caption{A rhombus Wang tileset up to translations and $\tfrac{2\pi}{5}$ rotations that defines Penrose tilings.}
	\label{fig:penrose_rules_wang_2pi5}
\end{figure}

\end{document}